\renewcommand\thesection{\arabic{section}}
\renewcommand\thesubsection{\arabic{subsection}}
\def\p@subsection{\thesection.}
\def\p@subsubsection{\thesection.\thesubsection.}
\numberwithin{equation}{section}
\theoremstyle{definition}
\newtheorem{thm}{Theorem}
\newtheorem{lemma}{Lemma}[section]
\newtheorem{prop}{Proposition}[section]
\theoremstyle{definition}
\newcommand{\be}{\begin{equation}}
\newcommand{\ee}{\end{equation}}
\newcommand{\lb}{\left}
\newcommand{\rb}{\right}
\newcommand{\mc}{\mathcal}
\newcommand{\ms}{\mathscr}
\newcommand{\mf}{\mathfrak}
\newcommand{\bb}{\mathbb}
\newcommand{\st}{~\mid~}
\newcommand{\eqsp}{\hspace{10pt};\hspace{10pt}} 
\newcommand{\union}{\cup} 
\newcommand{\inter}{\cap} 
\newcommand{\norm}[1]{\lb\Vert\, #1 \,\rb\Vert}		
\newcommand{\inp}[1]{\lb\langle #1 \rb\rangle}		
\newcommand{\surgrav}{\kappa}
\newcommand{\sqh}{\sqrt{h}}  
\newcommand{\Lie}{\pounds} 
\newcommand{\defn}{\mathrel{\mathop:}=} 
\newcommand{\svp}{\underline} 
\newcommand{\dom}{\mathrm{dom}} 
\begin{document}

\title{Black Hole Instabilities and Exponential Growth}
\author{Kartik Prabhu}\email{kartikp@uchicago.edu}
\author{Robert M. Wald}\email{rmwa@uchicago.edu}
\affiliation{Enrico Fermi Institute and Department of Physics,\\
The University of Chicago, Chicago, IL 60637, USA}

\begin{abstract}
Recently, a general analysis has been given of the stability with respect to axisymmetric perturbations of stationary-axisymmetric black holes and black branes in vacuum general relativity in arbitrary dimensions.
It was shown that positivity of canonical energy on an appropriate space of perturbations is necessary and sufficient for stability. However, the notions of both ``stability'' and ``instability'' in this result are significantly weaker than one would like to obtain. In particular, if there exists a perturbation with negative canonical energy, ``instability'' has been shown to occur only in the sense that this perturbation cannot asymptotically approach a stationary perturbation at late times. In this paper, we prove that if a perturbation of the form $\pounds_t \delta g$---with $\delta g$ a solution to the linearized Einstein equation---has negative canonical energy, then that perturbation must, in fact, grow exponentially in time. The key idea is to make use of the $t$- or ($t$-$\phi$)-reflection isometry, $i$, of the background spacetime and decompose the initial data for perturbations into their odd and even parts under $i$. We then write the canonical energy as \(\ms E\ = \ms K + \ms U\), where \(\ms K\) and \(\ms U\), respectively, denote the canonical energy of the odd part (``kinetic energy'') and even part (``potential energy''). One of the main results of this paper is the proof that $\ms K$ is positive definite for any black hole background. We use $\ms K$ to construct a Hilbert space $\ms H$ on which time evolution is given in terms of a self-adjoint operator $\tilde {\mc A}$, whose spectrum includes negative values if and only if $\ms U$ fails to be positive. Negative spectrum of $\tilde{\mc A}$ implies exponential growth of the perturbations in $\ms H$ that have nontrivial projection into the negative spectral subspace. This includes all perturbations of the form $\pounds_t \delta g$ with negative canonical energy. A ``Rayleigh-Ritz'' type of variational principle is derived, which can be used to obtain lower bounds on the rate of exponential growth.

\end{abstract}

\maketitle
\tableofcontents


\section{Introduction}\label{sec:intro}

Determining the stability of exact solutions to Einstein equations is a long standing problem. In order for a solution to be physically relevant, it is essential that sufficiently small perturbations not drive one away from that solution. However, the issue of full nonlinear stability is extremely difficult to analyze and has been settled only for Minkowski spacetime \cite{Ch-Kl}. As a first step, it is important to analyze the stability of solutions to linear perturbations, which solve the linearized Einstein equations (or in general, other fields that satisfy a linear dynamical equation).

A class of solutions of considerable interest are black hole and black brane spacetimes. All of the results of this paper will apply to the linear stability with respect to axisymmetric perturbations of stationary and axisymmetric black holes and black branes in vacuum general relativity in arbitrary dimensions. In order to keep our analysis and notation less cumbersome, we shall restrict consideration in this paper to the case of black holes, but the generalization of our results to the case of black branes is straightforward.

In \(4\)-spacetime-dimensions, the Schwarzschild and Kerr solutions have long been known to be the unique stationary black hole solutions to the the vacuum Einstein equation. These solutions are believed to physically describe the asymptotic final states of gravitational collapse. In order for this to be the case, it is essential that these solutions be stable. In higher dimensions, there exist many other types of black hole solutions, which are of interest for various theoretical reasons. It is of considerable interest to analyze the stability of these solutions as well.

To establish the linear stability of a solution, one must show that all suitably regular (e.g., smooth and satisfying appropriate asymptotic conditions) initial data for the linearized equations give rise to solutions that remain uniformly bounded in time. In the case of a linearly stable, stationary black hole solution, one would like to establish an even stronger result, namely that all suitably regular linearized solutions decay at asymptotically late times to a stationary solution. On the other hand, a considerably weaker notion of stability that is much easier to analyze is {\em mode stability}, i.e., the nonexistence of suitably regular solutions that grow exponentially in time. Clearly, mode instability implies instability in any other reasonable sense, but mode stability does not directly imply uniform boundedness or decay properties of linearized solutions.

As a simpler, model problem, it is useful to study scalar field perturbations of a black hole. For the case of a Schwarzschild black hole, it is easy to prove mode stability. Hilbert space methods were used in \cite{W-stab} to establish stability of Schwarzschild in the sense of uniform boundedness of solutions with a given spherical harmonic angular dependence. However, in addition to the restriction to a given spherical harmonic angular dependence, this approach required the imposition of the unwanted restriction of the vanishing of the perturbation on the bifurcation $2$-sphere. This latter restriction arose from the requirement that the perturbation lie in the Hilbert space used in the analysis and, as pointed out in \cite{Daf-Rod-lec}, it is equivalent to the requirement that the perturbation be expressible as $\pounds_t$ of another perturbation, where $\pounds_t$ denotes the Lie derivative with respect to the timelike Killing field, $t^\mu$, of the background spacetime. The restriction to a given spherical harmonic angular dependence can be straightforwardly removed and, indeed, the method can be generalized to analyze the stability of any static spacetime to scalar perturbations \cite{Wald-EYM}. However, it required a ``trick'' for Kay and Wald \cite{KW-stab} to remove the restriction of the vanishing of the perturbation on the bifurcation $2$-sphere, thereby establishing uniform boundedness of scalar perturbations of Schwarzschild. Nevertheless, the methods used in \cite{W-stab, Wald-EYM, KW-stab} are inadequate for establishing decay of perturbations. More recently, improved methods, described in \cite{Daf-Rod-lec}, have established decay of a scalar field in Schwarzschild and, indeed, similar results have been proven to hold in the much more difficult case of a Kerr black hole (including non-axisymmetric perturbations) \cite{FKSY, And, Tataru, DRS-stab}. However, these methods rely on very detailed properties of the Schwarzschild and Kerr metrics and cannot straightforwardly be generalized to arbitrary higher dimensional black holes.

For the case of gravitational perturbations, mode stability of $4$-dimensional Schwarzschild spacetime follows immediately from the form of the decoupled equations originally obtained by Regge and Wheeler \cite{Regge-Wheeler} and Zerilli \cite{Zerilli}. These results have been extended to prove mode stability of higher-dimensional Schwarzschild spacetimes by Ishibashi and Kodama \cite{IK-stab}. The methods of \cite{W-stab, KW-stab} could then be applied to show uniform boundedness of solutions with a given spherical harmonic angular dependence. However, on account of the gauge choices made, it is not straightforward to remove the restriction to a given spherical harmonic angular dependence. The decay results of \cite{Daf-Rod-lec} have not yet been generalized to gravitational perturbations, even for the case of Schwarzschild.

Recently, a new approach to investigating the linear stability of arbitrary static or stationary and axisymmetric black holes and black branes to axisymmetric gravitational perturbations was given in \cite{HW-stab}. A conserved, symmetric, gauge-invariant quadratic form on initial data for perturbations, called the \emph{canonical energy}, was introduced and shown to be given by the formula
\be
{\ms E} = \delta^2 M - \sum_\Lambda \Omega^\Lambda \delta^2 J_\Lambda - \frac{\surgrav}{2 \pi} \delta^2 A
\ee
where $M$ denotes the ADM mass, $J_\Lambda$ denotes the ADM angular momenta (with the sum over $\Lambda$ being taken over independent ``rotational planes''), and $A$ denotes the surface area of the horizon. 
For perturbations with $\delta M = \delta J_\Lambda = \delta P_i = \delta A = 0$ (where $P_i$ denotes the ADM linear momentum), it was shown that $\ms E$ is non-degenerate on the space of linearized solutions modulo gauge and modulo perturbations to other stationary and axisymmetric black holes. Consequently, either $\ms E$ is positive definite on this space or it can take negative values. Positive definiteness of $\ms E$ immediately establishes mode stability, since the existence of an exponentially growing solution is incompatible with the presence of a positive definite conserved norm. On the other hand, if $\ms E$ can take negative values, then a further argument establishes instability as follows: It was shown in \cite{HW-stab} that for axisymmetric perturbations, the flux of $\ms E$ through null infinity and through the black hole horizon must be positive. Thus, if the perturbation were to suitably approach a perturbation to another stationary black hole at asymptotically late times (with the limit suitably taken along the orbits of the timelike Killing field of the background), then ${\ms E}' < {\ms E}$, where ${\ms E}'$ denotes the canonical energy of this perturbation to another stationary black hole. Consequently, if $\ms E < 0$, then ${\ms E}' < 0$. Furthermore, for the limiting perturbation, we continue to have $\delta M = \delta J_\Lambda = 0$, since fluxes of mass and angular momenta are quadratic in the perturbation. But ${\ms E}' < 0$ then contradicts the fact that the canonical energy is degenerate on perturbations to other stationary black holes with $\delta M = \delta J_\Lambda = 0$, thus establishing ``instability'' in the sense that the perturbation cannot asymptotically approach a perturbation to another stationary black hole.

However, the analysis of \cite{HW-stab} leaves many questions unanswered with regard to the stability and instability of black holes to axisymmetric gravitational perturbations. In the first place, one would like to know---for any given black hole---whether the expression for $\ms E$ is positive. An explicit integral expression for $\ms E$ in terms of initial data $\lb( p_{ab}, q_{ab}\rb)$ for the linearized perturbation is given in Eq.86 of \cite{HW-stab}, where $q_{ab}$ denotes the perturbation of the spatial metric and $p_{ab}$ denotes the perturbation of the ADM momentum variable. Thus, the issue is simply to determine, for a given background, whether this expression is positive for all perturbations. However, the expression for $\ms E$ is quite complicated, and $\lb( p_{ab}, q_{ab}\rb)$ are not ``free'' but must satisfy constraints and boundary conditions. Furthermore, it would be difficult to show that the integral expression for $\ms E$ is positive without writing it in a form where the integrand is positive; however, although $\ms E$ is gauge invariant, the integrand is not, so this would require an appropriate choice of gauge and it is not obvious what gauge conditions will work. (The gauge conditions we will introduce later in this paper do not seem to work for this purpose.) For the case of a thermodynamically unstable black brane, there is a relatively obvious candidate perturbation that makes $\ms E$ negative, and failure of the positivity of $\ms E$ in this case was proven in \cite{HW-stab}. However, establishing the positivity of $\ms E$ in cases where it should be positive seems quite difficult. Indeed, even for the case of Schwarzschild, we have been able to prove positivity of $\ms E$ only by an indirect\footnote{In the case of Rindler spacetime, we have succeeded in proving positivity of $\ms E$ directly from the formula for $\ms E$.} argument---see remark 3 following \autoref{thm:positive-KE} of \autoref{sec:energies} below. We will not consider further in this paper the issue of determining the positivity of $\ms E$ in specific spacetimes.

As already mentioned above, positivity of $\ms E$ immediately implies mode stability, but if $\ms E$ is positive, one would expect significantly stronger results to hold, namely, the decay of perturbations. It is possible that the methods introduced in this paper may be adequate to prove uniform boundedness of perturbations when $\ms E$ is positive, i.e., the methods of this paper may be adequate for generalizing the results of Kay and Wald \cite{KW-stab} to axisymmetric gravitational perturbations of general stationary-axisymmetric black holes and black branes for which $\ms E$ is positive. However, considerable further analysis would be required to show this\footnote{The main things that would have to be shown are that (i) the operator $\mc A$ of \autoref{Adef} and its powers provide norms that, together with our gauge conditions, are equivalent to Sobolev norms and (ii) a version of the Kay and Wald ``trick'' can be used to eliminate the restriction to perturbations of the form of $\Lie_t$ applied to another perturbation.} and we shall not attempt to carry out this analysis here. Our methods would not, in any case, be adequate for proving decay results. To prove decay results, one would like to show that a suitably modified version of $\ms E$ is ``coercive.'' It would seem that a necessary first step toward showing this would be to have a good technique to show that $\ms E$ is positive (in cases where it is positive), but, as discussed in the previous paragraph, this is currently lacking. We shall not consider further in this paper this issue of obtaining strengthened stability results when $\ms E$ is positive.

As explained above, if $\ms E$ fails to be positive on a black hole or black brane spacetime, then it was shown in \cite{HW-stab} that one has instability in the sense that there exist perturbations that cannot approach a stationary perturbation at late times. However, if $\ms E$ fails to be positive, one might expect a much stronger result to hold, namely, the existence of initially well behaved perturbations that grow exponentially with time. The main purpose of this paper is to prove that this is indeed the case. Specifically, we will prove that if a perturbation of the form $\pounds_t \delta g$ has negative canonical energy, then that perturbation must, in fact, grow exponentially in time\footnote{In the last sentence of \cite{FMR}, the authors raise the question of whether violation of the local Penrose inequality implies the existence of perturbations that grow exponentially with time. Since it was shown in \cite{HW-stab} that violation of the Penrose inequality is equivalent to the failure of $\ms E$ to be positive, our results, in essence, answer that question in the affirmative.} (in all gauges). We now outline the key ideas used in the proof. 

The first key step---undertaken in \autoref{sec:lin-pert}---is to completely fix the gauge of our perturbations so as to obtain a unique time evolution. We consider a maximal slice $\Sigma$ of the background spacetime\footnote{Existence of $\Sigma$ is guaranteed by the results of \cite{Chr-Wald}.} and work in the space, $\ms P_{\rm gr}^\infty$, of linearized initial data $\delta X = (p_{ab}, q_{ab})$ on $\Sigma$ that lie in the intersection of the weighted Sobolev spaces defined by \autoref{eq:Sobolev-norm} below. Following the strategy of \cite{HW-stab} we impose the linearized constraints, the conditions $\delta M = \delta J_\Lambda = \delta P_i = 0$, and certain gauge conditions at the horizon by means of a projection map $\Pi_c$. We then fix the gauge completely by applying a similar projection map $\Pi_g$ that commutes with $\Pi_c$. The needed properties of $\Pi_c$ and $\Pi_g$ are proven in Appendix \ref{sec:proj-ops}. The space  $\Pi_c [\ms P_{\rm gr}^\infty]$ contains a gauge representative of any $\delta X \in \ms P_{\rm gr}^\infty$ that satisfies the linearized constraints and the conditions $\delta M = \delta J_\Lambda = 0$. The space $\ms V^\infty \defn \Pi_g \Pi_c [\ms P_{\rm gr}^\infty]$ contains a unique gauge representative of any such $\delta X$, and it therefore provides a suitable space to study dynamics.

The next key idea is to make use of the existence of a suitable reflection isometry, $i$, of the background spacetime. In the case of a static black hole, the maximal slice $\Sigma$ must be orthogonal to the static Killing field $t^\mu$. The desired isometry, $i$, can be constructed by ``$t$-reflection'' about $\Sigma$, i.e., $i$ is the diffeomorphism that takes a point $p$ lying at proper time $\tau$ along a normal geodesic $\gamma$ starting from $q \in \Sigma$ to the point $p'$ lying at proper time $-\tau$ along $\gamma$. 
For the case of a stationary-axisymmetric black hole, it has recently been proven \cite{SW-tphi} that if the stationary-axisymmetric isometries act trivially\footnote{By ``act trivially,'' we mean that if we remove from the spacetime manifold the points at which the stationary-axisymmetric Killing fields are linearly dependent, then the resulting manifold acquires the structure of a trivial principal fiber bundle with respect to the stationary-axisymmetric symmetries. This rules out behavior of axial Killing fields similar to that occurring in the Sorkin monopole \cite{Sor-monopole} \cite{GP-monopole}.}, then there exists a ``($t$-$\phi$)-reflection'' isometry, $i$, defined as follows\footnote{This result is well known in $4$-spacetime dimensions (see, e.g., \cite{Wald-book}), but, as explained in \cite{SW-tphi}, the proof in $4$-dimensions does not generalize to higher dimensions, so the additional assumption of a trivial action is needed in spacetime dimension greater than $4$.}: Let $\phi^\mu_1, \dots, \phi^\mu_p$ denote the collection\footnote{We may, if we wish, delete from this collection any axial Killing fields with vanishing horizon rotation provided that the action of the remaining Killing fields is also trivial \cite{SW-tphi}. In particular, for a static black hole, we may delete all of the axial Killing fields and take $i$ to be the ``$t$-reflection'' isometry, as assumed above.} of commuting axial Killing fields. Then, as shown in \cite{SW-tphi} the axial Killing fields are tangent to $\Sigma$ and are surface orthogonal within $\Sigma$, so there exists an isometry, $i_\Sigma:\Sigma \to \Sigma$, that reflects about a surface, $S$, in $\Sigma$. Now, extend $i_\Sigma$ to a spacetime difeomorphism $i$ by mapping a point $p$ lying at proper time $\tau$ along a normal geodesic starting from $q \in \Sigma$ to the point $p'$ lying at proper time $-\tau$ along the normal geodesic starting at $i_\Sigma(q)$. Then, as shown in \cite{SW-tphi}, $i$ preserves the initial data on $\Sigma$ and thus defines an isometry. We restrict consideration in the remainder of this paper to black holes that possess a $t$- or ($t$-$\phi$)-reflection isometry $i$.

The next key step---undertaken in \autoref{sec:energies}---is to decompose linearized initial data $\delta X = (p_{ab}, q_{ab}) \in \ms V^\infty$ into its odd and even parts under the action of $i$, i.e., we decompose the space, $\ms V^\infty$, of allowed initial data for perturbations as ${\ms V^\infty} = \ms V^\infty_{\rm odd} \oplus \ms V^\infty_{\rm even}$. We denote elements of $\ms V^\infty_{\rm odd}$ by ``$P$'' and we denote elements of $\ms V^\infty_{\rm even}$ by ``$Q$,'' so, by construction, we have $i^* P = - P$ and $i^* Q = Q$, where $i^*$ denotes the action induced by $i$. In the static case, we have $P = \lb( p_{ab}, 0\rb)$ and $Q = \lb(0, q_{ab}\rb)$. In the stationary-axisymmetric case, $P$ is a linear combination of the ``polar'' part of $p_{ab}$ and the ``axial'' part of $q_{ab}$, whereas $Q$ is a linear combination of the ``polar'' part of $q_{ab}$ and the ``axial'' part of $p_{ab}$ (see \autoref{eq:tphi-decomp} below). The linearized constraint equations are automatically invariant under $i$. Our gauge and boundary conditions are also chosen to be invariant under $i$, so $P$ and $Q$ decouple and we may treat them as independent perturbations. The decomposition $\ms V^\infty = \ms V^\infty_{\rm odd} \oplus \ms V^\infty_{\rm even}$ allows us to break up the canonical energy $\ms E$ into a sum of ``kinetic'' and ``potential'' energy,
\be
	{\ms E} = {\ms K} + {\ms U}
\ee
where $\ms K$ is the restriction of $\ms E$ to $\ms V^\infty_{\rm odd}$ and $\ms U$ is the  restriction of $\ms E$ to $\ms V^\infty_{\rm even}$; no ``($P$-$Q$)-cross-terms'' can arise on account of the reflection symmetry. Note that $\ms K$ defines a symmetric quadratic form on $\ms V^\infty_{\rm odd}$ and $\ms U$ defines a symmetric quadratic form on $\ms V^\infty_{\rm even}$.

The next key result---obtained at the end of \autoref{sec:energies}---is the proof that $\ms K$ is positive definite on $\ms V^\infty_{\rm odd}$, i.e.,
\be
{\ms K}(P,P) \geq 0 \, ,
\ee
with equality holding only if $P=0$. In particular, this shows that any failure of positivity of $\ms E$ must arise from the ``potential energy,'' $\ms U$. 

Time evolution is considered in \autoref{sec:evolution}. The ADM evolution equations mix the odd and even parts of the perturbation. Specifically, for axisymmetric perturbations, the evolution equations take the general form
\begin{eqnarray}
\dot{Q} = {\mc K} P \label{Qevolve} \\
\dot{P} = - {\mc U} Q \label{Pevolve}
\end{eqnarray}
for some operators $\mc K: \ms V^\infty_{\rm odd} \to \ms V^\infty_{\rm even}$ and $\mc U: \ms V^\infty_{\rm even} \to \ms V^\infty_{\rm odd}$, where the overdot denotes the Lie derivative, $\pounds_t$, with respect to $t^\mu$. Furthermore, the operators $\mc K$ and $\mc U$ appearing in the evolution equations are related to the above kinetic and potential energy quadratic forms by 
\begin{eqnarray}
{\ms K} (\tilde{P}, P) = \Omega_\Sigma (\tilde{P}, {\mc K} P)   \\
{\ms U} (\tilde{Q}, Q) = -\Omega_\Sigma (\tilde{Q}, {\mc U} Q)
\end{eqnarray}
where $\Omega_\Sigma$ denotes the symplectic form. These equations express the fact that the canonical energy $\ms E$ is actually a Hamiltonian for the linearized system. 

As we show in \autoref{sec:evolution}, the positivity of $\ms K$ allows us to define a ``${\mc K}^{-1}$ Hilbert space,'' $\ms H$, as follows. Consider $Q$ of the form $Q = {\mc K} P$ for some $P$; $Q$ will be of this form if and only if the initial data $P'=0$, $Q'=Q$ corresponds to a linearized solution that can be expressed as $\pounds_t$ of another linearized solution. Define the inner product of $\tilde{Q} = {\mc K} \tilde{P}$ and $Q = {\mc K} P$ by
\be
\inp{ \tilde{Q}, Q }_{\ms H} \defn {\ms K}(\tilde{P}, P)
\label{Hnorm}
\ee
and define $\ms H$ to be the Hilbert space completion of this inner product space. 

Combining the two time evolution equations \autoref{Qevolve} and \autoref{Pevolve}, we obtain the following second order evolution equation involving $Q$ alone:
\be\label{evoleq}
\ddot{Q} = -{\mc A} Q 
\ee
where
\be
{\mc A} \defn {\mc K} {\mc U} 
\ee
is a symmetric operator on the Hilbert space $\ms H$.
Consequently, by passing to a self-adjoint extension of $\mc A$, we can solve \autoref{evoleq} by spectral methods. For initial data in $\ms V^\infty_{\rm even} \cap \ms H$, we show that the Hilbert space solution must coincide with the PDE solution. It then follows that if there exists initial data of the form $Q = {\mc K} P$ for $P \in \ms V_{\rm odd}^\infty$ for which $\inp{ Q, {\mc A} Q }_{\ms H} = {\ms U} (Q,Q) < 0$, then the $\ms H$-norm of $Q$ must grow exponentially with time\footnote{The earliest reference that we are aware of for the basic argument that an equation of the form of \autoref{evoleq} will have exponentially growing solutions if $\mc A$ has negative spectrum is \cite{LMP}.}. The gauge invariant quantities $\ms K$ and $\ms U$ for this perturbation must also blow up exponentially, thus showing that the exponential blow up is not a gauge artifact.

The above instability result---including a quantitative statement about the rate of exponential growth---can be formulated as a ``Rayleigh-Ritz variational principle'' as follows. Let $P \in \ms V^\infty_{\rm odd}$ be any reflection-odd smooth initial data satisfying the constraints and such that $\delta J_\Lambda = \delta P_i = 0$. Write
\be
\omega^2 = \frac{{\ms U}({\mc K} P, {\mc K} P)}{{\ms K} (P,P)}
\ee
Then if $\omega^2 < 0$, the solution generated by $(P, Q=0)$ will grow at least as fast as $\exp(|\omega| t)$ (in all gauges). \\

The remainder of this paper is devoted to fleshing out these arguments. \autoref{sec:setup} discusses the properties of the background black hole spacetime and introduces the ADM formalism that we use throughout. In \autoref{sec:lin-pert} we introduce the spaces of interest for linearized perturbations and define the projection maps $\Pi_c$ and $\Pi_g$, which completely gauge-fix the perturbations. \autoref{sec:energies} introduces the splitting of the canonical energy into kinetic and potential energies and ends with the theorem showing that the kinetic energy is positive-definite. In \autoref{sec:evolution} we formulate the ADM equations as dynamical evolution equations on the Hilbert space $\ms H$. In \autoref{sec:exp} we solve these equations by spectral methods and prove that if the potential energy can be made negative on elements of $\ms H$, then there exist linear perturbations which have exponential growth in time. We also obtain our variational principle formulation. Appendix \ref{sec:proj-ops} provides the details of the construction of the projection operators $\Pi_c$ and $\Pi_g$. Appendix \ref{sec:other-fields} provides a parallel analysis for the Klein-Gordon scalar field and the electromagnetic field on a fixed black hole background. \\

In this paper, lower case Greek indices will be used to denote tensors on spacetime, e.g., $t^\mu$ denotes the timelike Killing field of the background black hole. Lower case Latin indices will be used to denote tensors on the initial data surface $\Sigma$, e.g., $q_{ab}$ denotes the perturbed metric on $\Sigma$. Upper case Latin indices will be used to denote tensors on the bifurcation surface, $B$, of the black hole, e.g., $\xi^A$ denotes a vector field on $B$. We will use the index ``$r$'' to denote projections normal to $B$, e.g., if $v_a$ is a one-form field on $\Sigma$, the normal component of its restriction to $B$ will be denoted $v_r = r^a v_a$, where $r^a$ is the unit normal to $B$. The spacetime derivative operator of the background black hole will be denoted as $\nabla_\mu$; the background derivative operator on $\Sigma$ will be denoted as $D_a$; the background derivative operator on $B$ will be denoted as $\ms D_A$. If we consider the restriction to $\Sigma$ of a spacetime vector field $\xi^\mu$, it will often be useful to represent $\xi^\mu$ as the pair $(\xi, \xi^a)$ where \(\xi \defn - u_\mu \xi^\mu \) (with $u^\mu$ the unit normal to $\Sigma$) is the component of $\xi^\mu$ normal to \(\Sigma\) while \(\xi^a\) is the projection of $\xi^\mu$ tangent to $\Sigma$. We will use the notation $\svp \xi \equiv (\xi, \xi^a)$ when we wish to view $\xi^\mu$ in this way.

\section{Background Spacetime}\label{sec:setup}

We consider a \((d+1)\)-spacetime-dimensional, asymptotically flat, static or stationary-axisymmetric black hole spacetime \(( M, g)\) shown in \autoref{fig:spacetime}, with a bifurcate Killing horizon \(  H \defn  H^+ \union  H^- \), and bifurcation surface \(  B \defn  H^+ \inter  H^- \). Let $t^\mu$ denote the time translation Killing field, i.e., the Killing field that is timelike near infinity. Let $\Sigma$ be an asymptotically flat Cauchy surface for one of the exterior wedges, smoothly terminating at $B$. Below, we will choose $\Sigma$ to be a maximal slice but we need not make this choice now. We also assume that the bifurcation surface \(B\) is compact, but we do not assume any further restrictions on its topology. Let  \( \Sigma_t \) denote the foliation obtained by applying time translations to \(\Sigma\). Let $u^\mu$ denote the future-directed unit normal to \(\Sigma\). We decompose $t^\mu$ into its normal and tangential parts relative to $\Sigma$, referred to as the \emph{lapse}, \(N = - u_\mu t^\mu\), and \emph{shift}, \(N^a\), on \(\Sigma\).

	\begin{figure}[h!]
		\centering
		\includegraphics[width=0.65\textwidth]{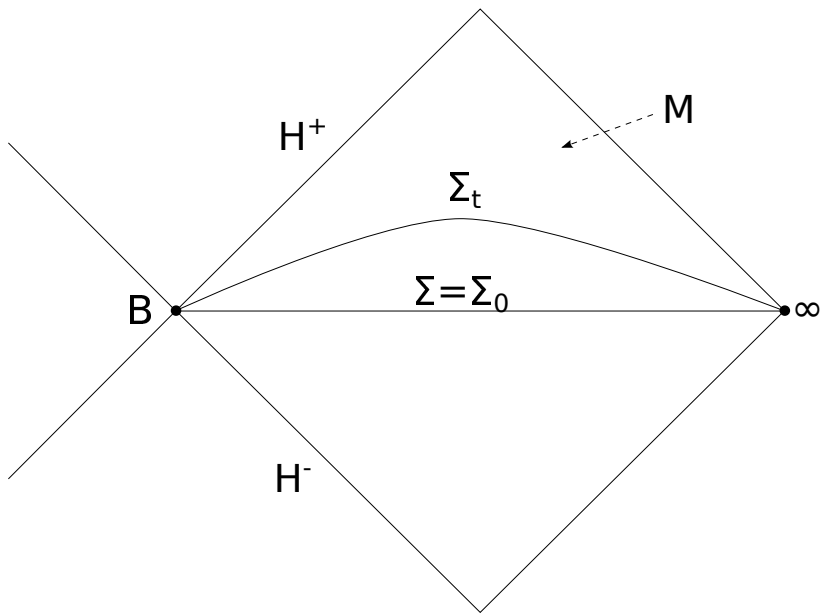}
		\caption{Carter-Penrose diagram of the black hole spacetime \((M,g)\).}\label{fig:spacetime}
	\end{figure}

Let \( h_{ab} \) denote the induced metric on $\Sigma$, with determinant with respect to some fixed volume form denoted by \(h\), and let \(K_{ab}\) denote the extrinsic curvature of $\Sigma$. The phase space of general relativity is the set of initial data on \(\Sigma\) denoted by \( X = \lb( \pi^{ab}, h_{ab}\rb) \) where \(\pi^{ab} \defn \sqh(K^{ab}-K~h^{ab}) \) is the canonical momentum-density conjugate to \(h_{ab}\) with \(K \defn K_{ab}h^{ab}\). In order to correspond to a solution to Einstein's equation, the initial data must satisfy the constraint equations \( {\mf C}^\mu= 0\), with
\begin{subequations}\label{eq:constraints}
	\begin{align}
		\mf C  & = - R + \tfrac{1}{h} \lb( \pi^{ab}\pi_{ab} - \tfrac{1}{d-1}\pi^2\rb) \label{eq:contraint-H}	\\
		\mf C^a & = -2D_b\lb( \frac{\pi^{ab}}{\sqh} \rb)  \label{eq:constraint-diff}
	\end{align}
\end{subequations}
where \(\pi \defn \pi_{ab}h^{ab}\), \( D_a \) denotes the covariant derivative compatible with \( h_{ab} \), and \( R_{ab} \) is the Ricci curvature of \(h_{ab}\). 

On the bifurcation surface \( B\), we introduce a unit normal vector \(r^a\) (pointing into \(\Sigma\)) and, without loss of generality, extend \(r^a\) in a neighbourhood of \( B\) such that it is a geodesic, so that, in particular, \((r^aD_ar_b) \vert_B= 0\). The projector onto \( B\) is \(s_{ab} \defn h_{ab} - r_ar_b\). As noted at the end of the introduction, we will use capital Latin letters \(A,B,C,\ldots\) to denote tensors on \( B\) and an index \(r\) to denote projections normal to \( B\). The induced metric on $B$ will be denoted as $s_{AB}$. We denote the metric-compatible covariant derivative operator on \( B\) by \(\ms D_A\), and we write \( d_r \defn r^aD_a \). The induced volume element on $B$ will be denoted as $\varepsilon_{{A_1} \dots A_{d-1}}$. Note that $N\vert_B =0$ and $(r^a D_a N)\vert_B =\surgrav$, where $\surgrav$ is the surface gravity of the black hole (assuming that the axial Killing fields are tangent to \(\Sigma\)). We also have \((\vartheta_\pm)\vert_B = 0 \), where
\be\label{eq:area-expansion-bg}
	\vartheta_\pm \defn s^{ab}\lb( K_{ab} \pm D_ar_b \rb) \, ,
\ee
i.e., $(\vartheta_+)\vert_B$ is the expansion of the outgoing null geodesics at $B$ and $(\vartheta_-)\vert_B$ is the expansion of the ingoing null geodesics at $B$.

The asymptotic flatness conditions on our stationary black hole are that there exist coordinates \( (x_1, \ldots, x_d) \) on \(\Sigma\) such that
\be\label{eq:asymp-bg}\begin{split}
	h_{ab} \sim \delta_{ab} + O(1/\rho^{d-2}) \eqsp \pi^{ab} \sim O(1/\rho^{d-1}) \\
	N \sim 1 + O(1/\rho^{d-2}) \eqsp N^a \sim O(1/\rho^{d-2})
\end{split}\ee
where \( \rho = \lb( x_1^2 + \ldots + x_d^2 \rb)^{1/2} \) near infinity. In addition,
\(k^{th}\) derivatives of the above quantities are required to fall off faster by an additional factor of \(1/\rho^k\). The asymptotic conditions on the lapse and shift ensure, in particular, that $t^\mu$ goes to an asymptotic time-translation at infinity.

The ADM time evolution equations are (see, e.g., Sec.E.2 of \cite{Wald-book} for \(d=3\) and Sec.VI.6 of \cite{CB-book} for general \(d\)):
\begin{subequations}\label{eq:evol-bg}
	\begin{align}
		\begin{split}		
		\tfrac{1}{\sqh}\dot\pi^{ab}   = & -N \lb( R^{ab} - \tfrac{1}{2} R h^{ab}\rb) + \tfrac{N}{2h}~h^{ab} \lb( \pi^{cd}\pi_{cd} - \tfrac{1}{d-1}\pi^2 \rb) - \tfrac{2N}{h}\lb( \pi^{ac} {\pi_c}^b - \tfrac{1}{d-1} \pi\pi^{ab} \rb) \\
		& +  D^aD^bN - h^{ab}\triangle N  + D_c \lb( N^c \tfrac{\pi^{ab}}{\sqh} \rb) - \tfrac{2}{\sqh}~ \pi^{c(a}D_cN^{b)} \\
		\end{split} \\ 
		&\dot h_{ab}  =  \frac{2N}{\sqh}\lb( \pi_{ab} - \tfrac{1}{d-1}\pi h_{ab} \rb) + 2 D_{(a}N_{b)}
	\end{align}
\end{subequations}
where the overdot denotes $\Lie_t$ and \(\triangle \defn D^aD_a\) is the Laplacian on \(\Sigma\). 
Since we are considering stationary black hole spacetimes, the left side of \autoref{eq:evol-bg} vanishes. 

We can significantly further simplify the right side of \autoref{eq:evol-bg} by choosing $\Sigma$ to be a maximal slice, whose existence\footnote{The proof of \cite{Chr-Wald} was given in the case of $4$ spacetime dimensions, but it generalizes straightforwardly to arbitrary dimensions.} was proven in \cite{Chr-Wald}. As shown in \cite{SW-tphi}, the axial Killing fields $\phi^\mu_\Lambda$, with $\Lambda = 1, \dots, p$, are then tangent to $\Sigma$, so we also may denote them as $\phi^a_\Lambda$. We will assume the existence of a ($t$-$\phi$)-reflection isometry about $\Sigma$, which, as discussed in the introduction, has been proven to exist (see \cite{SW-tphi}) if the action of the stationary and axisymmetric isometries is trivial. At a point of $\Sigma$, the ($t$-$\phi$)-reflection isometry reverses the directions of $t^\mu$ and $\phi^\mu_\Lambda$ but leaves the subspace orthogonal to $t^\mu$ and $\phi^\mu_\Lambda$ invariant. Since the normal, $u^\mu$, to $\Sigma$ must reverse sign under the isometry, it must be expressed as a (variable) linear combination of $t^\mu$ and the axial Killing fields $\phi^\mu_1, \dots, \phi^\mu_p$
\be
u^\mu (x) = \alpha(x) [t^\mu - \sum_\Lambda \bar{N}^\Lambda (x) \phi^\mu_\Lambda] \, .
\ee
Thus, the shift vector takes the form 
\be
N^a = \sum_\Lambda \bar N^\Lambda\phi^a_\Lambda \, .
\ee
The extrinsic curvature of $\Sigma$ is purely ``axial'' (i.e., odd under $\phi$-reflection), so $\pi^{ab}$ takes the form
\be
\pi^{ab} = 2\sqh \sum_\Lambda \pi^{\Lambda(a}\phi^{b)}_\Lambda 
\ee
with $\pi^\Lambda_a\phi^a_\Theta = 0$ for all $\Lambda,\Theta$. 

The axial Killing fields comprise a vector space $\bb V$, and it is useful to think of the $\Lambda$-index in $\phi^\mu_\Lambda$ as an abstract index of $\bb V$ rather than a labeling index running from $1$ to $p$. At each $x \in \Sigma$ where the Killing fields are linearly independent, we can then define a positive definite inverse metric $\Phi_{\Lambda\Theta}(x)$ on $\bb V$ by
\be
\Phi_{\Lambda\Theta}(x) \defn h_{ab}(x) \phi^a_\Lambda\phi^b_\Theta  
\ee
We will use $\Phi_{\Lambda\Theta}$ and its inverse, $\Phi^{\Lambda\Theta}$ to lower and raise \(\bb V\)-indices. Note, however, that \(D_a\Phi_{\Lambda\Theta} \neq 0\) so, while \(\phi^a_\Lambda\) satisfies Killing's equation, \(\phi^{a\Lambda}\) does not. The $(d-p)$-dimensional surface orthogonality of \(\phi^a_\Lambda\) within $\Sigma$ together with Killing's equation implies that 
	\be\label{eq:Dphi}
		D_a\phi_{b\Lambda} = -  \Phi^{\Theta\Xi} \phi_{\Theta[a}D_{b]}\Phi_{\Lambda\Xi} \, .
	\ee

With the above choice of $\Sigma$, the ADM evolution equations \autoref{eq:evol-bg} reduce to:
\be\label{eq:Ric-id}
NR_{ab} = D_aD_b N -2 N\lb( \pi_a^\Lambda\pi_{b\Lambda} - \pi_c^\Lambda \pi^{c\Theta} \phi_{a\Lambda}\phi_{b\Theta} \rb)
\ee
and
\be\label{eq:shift-id}
D_a \bar N^\Lambda  = - 2 N\pi_a^\Lambda
\ee
In addition, the constraint equations \autoref{eq:constraints} become:
\be
R = 2\pi_a^\Lambda \pi^a_\Lambda
\ee
and
\be
D_a\pi^a_\Lambda = 0
\ee
These relations simplify considerably in the static case, where $\pi^{ab} = 0$ and $N^a = 0$. The evolution equations then reduce to 
\be
NR_{ab} = D_aD_bN
\ee
and the constraint equations reduce to
\be
R = 0 \, .
\ee


\section{Linear Perturbations: Constraints, Boundary Conditions, and Gauge Conditions}\label{sec:lin-pert}

Let \( X(\lambda) = \lb(\pi^{ab}(\lambda), h_{ab}(\lambda) \rb)\) be a one-parameter family of initial data that is jointly smooth in \(\lambda\) and point on $\Sigma$, with \( X(0) \) corresponding to initial data for a stationary black hole, as discussed in the previous section. Linearized perturbations off of \(X(0) \) are characterized by
\be
\delta X = (p_{ab},  q_{ab})
\ee
where
\be
\sqh p^{ab} \defn \delta\pi^{ab} = \frac{d}{d\lambda}\pi^{ab}(\lambda)\vert_{\lambda=0} \eqsp q_{ab} \defn \delta h_{ab} = \frac{d}{d\lambda}h_{ab}(\lambda)\vert_{\lambda=0}
\ee

The main task of this section is to define the space of perturbations, $\ms V^\infty$, that we will work with. We wish to define a space of perturbations that include all perturbations that (i) are smooth and satisfy appropriate asymptotic fall-off properties at infinity, (ii) satisfy the constraints, (iii) satisfy the horizon gauge conditions needed to define the canonical energy, and (iv) have vanishing perturbed ADM mass, linear momentum, and angular momentum as needed for the stability analysis. Furthermore, we want the perturbations to satisfy the property that (v) they are completely ``gauge fixed.''

The smoothness and asymptotic fall-off properties will be enforced by requiring the initial data $\delta X$ on $\Sigma$ to lie in suitable weighted Sobolev spaces. Let $\rho$ be a positive function that goes to $1$ in a neighborhood of $B$ and approaches \( \lb( x_1^2 + \ldots + x_d^2 \rb)^{1/2} \) near infinity. Let \(W^k_\rho\) denote the closure of data in $C^\infty(\Sigma)$ that vanish in a neighborhood of infinity in the norm
\be\label{eq:Sobolev-norm}
	\norm{\delta X}^2_{W^k_\rho} \defn \sum\limits_{n=0}^k \int_\Sigma \rho^{2n}\left[(D_{a_1}\ldots D_{a_n} p_{bc}) (D^{a_1}\ldots D^{a_n} p^{bc}) + (D_{a_1}\ldots D_{a_n} q_{bc}) (D^{a_1}\ldots D^{a_n} q^{bc})\right]
\ee
Here and below, the integral over \(\Sigma\) is taken with respect the volume form induced by the background metric \(h_{ab}\). The space of interest for our analysis is
\be
\ms P_{\rm gr}^\infty \defn \inter_k W^k_\rho \, .
\label{Pgrinfty}
\ee
The family of $W^k_\rho$-norms gives $\ms P_{\rm gr}^\infty$ the natural structure of a Fr\'{e}chet space. The finiteness of all the above Sobolev norms together with the standard Sobolev estimates, implies that all elements of $\ms P_{\rm gr}^\infty$ are smooth, with $p_{ab}, q_{ab} = o(1/\rho^{d/2})$ as $\rho \to \infty$ and all spatial derivatives falling faster by corresponding powers of $\rho$. 
Note that the fall-off conditions for data in $\ms P_{\rm gr}^\infty$ are weaker than normally assumed (see \autoref{eq:asymp-bg} above), except for $d=3,4$ where our weighted Sobolev space conditions impose a faster than normal fall-off in $q_{ab}$ and thereby exclude perturbations that change the ADM mass. Since we will be interested only in perturbations for which $\delta M = 0$, this will not impose any unwanted restrictions when $d=3,4$. Note also that we impose the same fall-off conditions on $p_{ab}$ and $q_{ab}$ rather than requiring faster fall-off on $p_{ab}$. It is important for our constructions below that we treat $p_{ab}$ and $q_{ab}$ on an ``equal footing.'' The imposition of the same fall-off rates on $p_{ab}$ and $q_{ab}$ would cause difficulties if we wished to consider time evolution to ``boosted slices.'' However, when we consider time evolution of perturbations in \autoref{sec:evolution}, we will evolve only to ``time translated'' Cauchy surfaces.

Although our interest is in perturbations $\delta X \in \ms P_{\rm gr}^\infty$, it is very convenient to perform constructions in the larger $L^2$-space 
\be
\ms P_{\rm gr} \defn W^0_\rho \supset \ms P_{\rm gr}^\infty
\ee
with inner product
\be\label{eq:inner-prod-defn}
	\inp{ \widetilde{\delta X} , \delta X } = \int_\Sigma \lb( \tilde p^{ab}p_{ab} + \tilde q^{ab}q_{ab} \rb) \, ,
\ee
since, as we shall see below, the constraints and gauge conditions can be expressed in terms of orthogonal projection maps on $\ms P_{\rm gr}$. Note that the symplectic form
\be\label{eq:symplectic-form}
\Omega_\Sigma \lb( X;\widetilde{\delta X},\delta X \rb) \defn \int_\Sigma \lb( \tilde p^{ab} q_{ab} - \tilde q_{ab}p^{ab} \rb)
\ee
is represented on $\ms P_{\rm gr}$ by the bounded linear map \(\mc{S}\) given by
\be\label{eq:symplectic-op-defn}
\mc S \lb( p_{ab}, q_{ab} \rb) = \lb( q_{ab}, -p_{ab} \rb) 
\ee
Note further that \(\mc S^* = - \mc S\) and \(\mc S^2 = -1\), so \(\mc S: \ms P_{\rm gr} \to \ms P_{\rm gr} \) is an orthogonal map.

Of course, we are interested only in the elements \(\delta X \in \ms P_{\rm gr}^\infty \) that satisfy the linearized constraints. We may view the operator obtained by linearizing the constraints \autoref{eq:constraints} off of the stationary black hole background as a map, $\mc L$, taking smooth initial data on $\Sigma$ to a pair 
\be
\svp\xi \equiv (\xi, \xi^a)
\ee
consisting of a smooth scalar field, $\xi$, and smooth vector field, $\xi^a$, on $\Sigma$:
\be\label{eq:L-defn}\begin{split}
	\mc L\begin{pmatrix} p_{ab} \\ q_{ab} \end{pmatrix} \defn \begin{pmatrix} \begin{split}
		&\tfrac{2}{\sqh}\lb( \pi^{ab} - \tfrac{1}{d-1}\pi~ h^{ab} \rb)p_{ab} + \tfrac{2}{h}\lb( \pi^{ac} {\pi_c}^b - \tfrac{1}{d-1} \pi\pi^{ab} \rb) q_{ab} \\
		& -\tfrac{1}{h}\lb[ \pi^{ab}\pi_{ab} - \tfrac{1}{d-1}\pi^2 \rb]q -  D^aD^bq_{ab} + \triangle q + R^{ab}q_{ab}
		\end{split} \\[25pt]
		-2D_bp^{ab} - \tfrac{1}{\sqh}\pi_{bc} \lb ( D^c q^{ba} + D^b q^{ca} - D^a q^{bc} \rb)
	\end{pmatrix}
\end{split}\ee
where \(q \defn q_{ab}h^{ab}\). In this notation, the linearized constraints are
\be\label{linconst}
\svp{\mf c} \defn \mc L \, \delta X = 0 \, .
\ee
The formal $L^2$-adjoint of $\mc L$ is the map
\be \label{eq:L*-defn}\begin{split}
	\mc L^*\begin{pmatrix} \xi \\ \xi^a \end{pmatrix} \defn \begin{pmatrix}
		\frac{2\xi}{\sqh}\lb( \pi_{ab} - \tfrac{1}{d-1}\pi h_{ab} \rb) + 2 D_{(a}\xi_{b)} \\[25pt]
	 \begin{split}
		& \xi \lb( R^{ab} - \tfrac{1}{2} R h^{ab}\rb) - \tfrac{\xi}{2h}~h^{ab} \lb( \pi^{cd}\pi_{cd} - \tfrac{1}{d-1}\pi^2 \rb) + \tfrac{2\xi}{h}\lb( \pi^{ac} {\pi_c}^b - \tfrac{1}{d-1} \pi\pi^{ab} \rb) \\
		& -  D^aD^b\xi + h^{ab}\triangle\xi  - D_c \lb( \xi^c \tfrac{\pi^{ab}}{\sqh} \rb) + \tfrac{2}{\sqh}~ \pi^{c(a}D_c\xi^{b)}
		\end{split}
	\end{pmatrix}
\end{split}\ee
where we have used the background constraints \autoref{eq:constraints} in computing the adjoint. Note that ${\mc S}^* {\mc L}^* \svp\xi$ corresponds precisely to the infinitesimal gauge transformation generated by the vector field $\xi^\mu \equiv (\xi, \xi^a)$. Note also that since all gauge transformations are solutions to the linearized constraints, the equation \(\mc L \mc S^* \mc L^* = 0 \) holds as an identity.

As explained in \cite{HW-stab}, in addition to the constraints, we also must impose some restrictions on the perturbations and some gauge conditions at $B$ in order that the canonical energy have suitable gauge invariance and non-degeneracy properties. The restrictions we impose are $\delta M = \delta J_\Lambda = 0$, where $J_\Lambda$ denotes the angular momenta conjugate to the axial Killing fields $\phi^\mu_\Lambda$; we also impose $\delta P_i = 0$ where $P_i$ denotes the linear momenta, but this is not a physical restriction since this condition can be achieved via an asymptotic Lorentz boost. The gauge conditions at $B$ that were imposed in \cite{HW-stab} were $\delta\varepsilon = 0$ and $\delta \vartheta_+ = 0$, where 
\be
\delta\varepsilon = \tfrac{1}{(d-1)!}\varepsilon^{A_1 \dots A_{d-1}} \delta\varepsilon_{A_1 \dots A_{d-1}}
\ee
is the perturbed area element\footnote{Since $\delta M = \delta J_\Lambda = 0$, the first law of black hole mechanics implies $\delta A=0$, in which case $\delta\varepsilon = 0$ can be imposed by a gauge choice.} of $B$ and $\delta \vartheta_+ $ is the perturbed outgoing expansion of $B$. However, for our purposes, it is essential that our gauge conditions at $B$ respect the ($t$-$\phi$)-reflection isometry $i$. For this reason, we impose the additional gauge condition that the ingoing expansion of $B$ also vanish, $\delta \vartheta_- = 0$. That both $\delta \vartheta_+ = 0$ and $\delta \vartheta_- = 0$ can be achieved without imposing any physical restrictions on the perturbation can be seen as follows: It was proven in \cite{HW-stab} that the gauge condition $\delta \vartheta_+ = 0$ always can be imposed. However, the condition $\delta \vartheta_+ = 0$ does not uniquely determine a $2$-surface; rather it holds on all cross-sections of an outgoing null hypersurface in the perturbed spacetime. The condition $\delta \vartheta_- = 0$ similarly holds on all cross-sections of an ingoing null hypersurface. The intersection of these hypersurfaces defines a unique surface on which both $\delta \vartheta_+ = 0$ and $\delta \vartheta_- = 0$. An infinitesimal differmorphism that moves this surface to $B$ achieves the desired gauge condition.

Thus, the conditions we impose on perturbations are
\be\label{admcond}
\delta M = \delta J_\Lambda = \delta P_i = 0
\ee
\be\label{horcond}
\delta \epsilon = \delta \vartheta_+ = \delta \vartheta_- = 0 \, .
\ee
As already explained above, only the conditions $\delta M = \delta J_\Lambda = 0$ are physical restrictions on the perturbations; the remaining conditions $\delta P_i = \delta \epsilon = \delta \vartheta_+ = \delta \vartheta_- = 0$ can be achieved by a choice of gauge. The relations \autoref{horcond} can be written more explicitly in terms of the perturbed initial data as
\begin{subequations}\label{eq:bound-defn}\begin{align}
	0 = \delta\varepsilon & = \tfrac{1}{2}{q^A}_A  \label{eq:bound-defn-area} \\ 
	0 = \delta\vartheta_{\rm odd} & = -p_{rr}  \label{eq:bound-defn-expodd} \\
	0 = \delta\vartheta_{\rm even} & = \tfrac{1}{2}\lb( d_r {q^A}_A - 2 \ms D^Aq_{Ar} \rb) \label{eq:bound-defn-expeven}
\end{align}\end{subequations}
where \(\delta\vartheta_{\pm} = \delta\vartheta_{\rm odd} \pm \delta\vartheta_{\rm even}\). Here we have used the subscripts ``\(\rm{odd}\)'' and ``\(\rm{even}\)'' to denote parts of the perturbed expansion which are odd and even under the \(t\)-\(\phi\) reflection isometry, as will be discussed further in \autoref{sec:energies}.

Following the strategy of \cite{HW-stab}, we impose the linearized constraints \autoref{linconst} and the additional conditions \autoref{admcond} and \autoref{horcond} by the following procedure. Let $\ms W_c$ be the subspace of \(\ms P^\infty_{\rm gr}\) comprised by smooth gauge transformations that become an asymptotic translation or a rotation with respect to an axial Killing field\footnote{Note that {\em all} axial Killing fields of the background spacetime are included in the definition of $\ms W_c$. In particular, for perturbations of Schwarzschild, $\xi^\mu$ may approach an arbitrary asymptotic rotation at infinity.} at infinity and whose projection onto $B$ vanishes
\be\label{eq:Wc-defn}\begin{split}
	\ms W_c  & \defn \{ \mc S^*\mc L^* \svp\xi \in \ms P_{\rm gr}^\infty \st \xi^A \vert_B= 0 \text{ and } \xi^\mu \text{ goes to an asymptotic translation plus } \\
 	&\qquad \text{ rotation with respect to any axial Killing field at infinity} \}
\end{split}\ee
where, as above, $\svp \xi$ denotes the pair consisting of a smooth scalar field, $\xi$, and smooth vector field, $\xi^a$, on $\Sigma$. For any $\svp \xi$ as in \autoref{eq:Wc-defn} and any $\delta X \in \ms P_{\rm gr}^\infty$, a direct computation yields \cite{HW-stab}
\be\label{eq:bound-integral}
	\inp{\mc L^*\svp\xi, \delta X} = \inp{\svp\xi, \mc L\delta X} + \int_B \lb[ -2d_r \xi~ \delta\varepsilon + \xi~(\delta\vartheta_+  + \delta\vartheta_-) + \xi_r~(\delta\vartheta_+  - \delta\vartheta_-) \rb] + \delta H_{\svp\xi} 
\ee
where \(\delta H_{\svp\xi}\) is the boundary term at infinity that represents the perturbed ADM conserved quantity corresponding to the asymptotic symmetry $\xi^\mu$. Thus, $\delta X \in \ms P_{\rm gr}^\infty$ satisfies the constraints \autoref{linconst} and the conditions \autoref{admcond} and \autoref{horcond} if and only if it is $L^2$-orthogonal to $\mc L^* \svp\xi$ for all $\svp \xi$ as in \autoref{eq:Wc-defn}. 

Let $\ms V_c $ denote the closed subspace of $\ms P_{\rm gr}$ that is symplectically-orthogonal to \(\ms W_c \),
\be\label{eq:Vc-defn}
	\ms V_c  \defn \ms W_c^{\mc S\perp} = \{ \delta X \in \ms P_{\rm gr} \st \inp{\widetilde{\delta X}, \mc S \delta X } = 0 \text{ for all } \widetilde{\delta X} \in \ms W_c \}
\ee
Then elements of $\ms V_c$ {\em weakly} satisfy the constraints \autoref{linconst} and the conditions \autoref{admcond} and \autoref{horcond}. The perturbations of interest for our analysis are those that lie in the subspace
\be
\ms V_c^\infty \defn \ms V_c \cap \ms P_{\rm gr}^\infty \, .
\ee
Let $\Pi_c : \ms P_{\rm gr} \to \ms P_{\rm gr}$ be the orthogonal projection operator onto $\ms V_c$. By slight modifications of the proof of Lemma 3 of \cite{HW-stab}, we show in Appendix \ref{sec:proj-ops} that---by virtue of the fact that $\Pi_c$ is constructed by solving an elliptic system---we have
\be
\ms V_c^\infty  = \Pi_c  [\ms P_{\rm gr}^\infty]  \, .
\ee
Furthermore, we show that $\Pi_c: \ms P_{\rm gr}^\infty \to \ms V_c^\infty$ is continuous in the natural (Fr\'{e}chet) topology of $\ms P_{\rm gr}^\infty$. Note also that since $\ms P_{\rm gr}^\infty$ is dense in $\ms P_{\rm gr}$ in the $L^2$-topology and $\Pi_c$ is an orthogonal projection map, we have
\be
\overline{\ms V_c^\infty} = \ms V_c
\label{dense}
\ee
where $\overline{\ms V_c^\infty}$ denotes the closure of $\ms V_c^\infty$ in the $L^2$-topology.

The initial data in $\ms V_c^\infty$ has considerable gauge freedom, as only the gauge conditions \autoref{horcond} at $B$ have been imposed, as well as $\delta P_i = 0$. Thus all smooth gauge transformations are allowed that preserve these conditions at $B$ and approach asymptotic translations and/or rotations at infinity. When we consider time evolution in \autoref{sec:evolution}, this gauge freedom would create a nuisance for defining time evolution operators. Fortunately, it is possible to fix the gauge freedom completely as follows. The allowed gauge transformations $\hat{\delta} X \defn \mc S^*\mc L^* \svp\xi$ are precisely the ones lying in $\ms P_{\rm gr}^\infty$ that additionally satisfy \autoref{admcond} and \autoref{horcond}. Satisfaction of \autoref{admcond} requires that $\xi^\mu$ asymptotically approach a translation or rotation\footnote{Since $M \neq 0$ in the background, an arbitrary asymptotic Lorentz boost will yield $\delta P_i \neq 0$. However, an arbitrary translation or rotation will not change $M$, $J_\Lambda$, or $P_i$ to first order. Note that here the rotation may be arbitrary, i.e., it does not have to be along $\phi^a_\Lambda$.} at infinity. From \autoref{eq:bound-defn}, it follows that satisfaction of \autoref{horcond} requires
\begin{subequations}\label{eq:bound-gauge}\begin{align}
	0 = \hat{\delta}\varepsilon & = \ms D_A\xi^A \label{eq:bound-gauge-area} \\	
	0 = \hat{\delta}\vartheta_{\rm odd} &  = \lb( \ms D^A\ms D_A - 2\pi_r^\Lambda\phi^A_\Lambda\ms D_A + \pi_r^\Lambda \pi_{r\Lambda} - \tfrac{1}{2}\ms R \rb)\xi \label{eq:bound-gauge-expodd} \\
	0 = \hat{\delta}\vartheta_{\rm even} & = -\lb( \ms D^A\ms D_A - 2\pi_r^\Lambda\phi^A_\Lambda\ms D_A + \pi_r^\Lambda \pi_{r\Lambda} - \tfrac{1}{2}\ms R \rb)\xi_r 	\label{eq:bound-gauge-expeven}
\end{align}\end{subequations}
where \(\ms R\) is the Ricci scalar of \(B\). The requirements that $\hat{\delta}\vartheta_{\rm odd} = \hat{\delta}\vartheta_{\rm even} = 0$ can be shown to imply $\xi\vert_B = \xi_r\vert_B = 0$ by the same argument as given in \cite{HW-stab} (with their \(\beta^A \equiv 2\sqh \pi^A_r = 2\pi_r^\Lambda\phi^A_\Lambda \) and we have used the axisymmetry of the background only), which extended arguments given in  \cite{And-Mars-Sim} and \cite{Gal-Sch}. Thus, the space $\ms W_g$ of allowed gauge transformations is given by
\be\label{eq:Wg-defn}\begin{split}
	\ms W_g   \defn &\{ \mc S^*\mc L^* \svp\xi \in \ms P_{\rm gr}^\infty \st \xi \vert_B = \xi_r\vert_B  = \ms D_A\xi^A \vert_B = 0  \text{ and } \xi^\mu \text{ asymptotically } \\
 	& \qquad \text{ approaches an arbitrary translation plus rotation at infinity} \} 
\end{split}\ee
Remarkably, $\ms W_g$ differs from $\ms W_c$ only in that the condition $\xi^A \vert_B = 0 $ in the definition of $\ms W_c$ has been replaced by the conditions $\xi \vert_B = \xi_r\vert_B  = \ms D_A\xi^A \vert_B = 0$, and the asymptotic conditions at infinity are somewhat different.

We now fix the gauge completely on $\ms V_c$ by requiring orthogonality to \( \ms W_g \) in the \(L^2\)-inner product. For smooth elements $\delta X \in \ms V^\infty_c$, this $L^2$-orthogonality will hold if and only if 
\begin{subequations}\label{eq:gauge-conds}\begin{align}
	\svp{\mf g} \defn \mc L\mc S\delta X = 0  \label{eq:gauge-conds-bulk} \\
	\begin{pmatrix} \delta M,  \delta P,  \delta J \end{pmatrix}(\mc S\delta X) = 0 \label{eq:gauge-conds-adm} \\
	\begin{pmatrix} \delta\varepsilon,  \delta\varpi_{AB}\end{pmatrix}(\mc S\delta X)\vert_B = 0 \label{eq:gauge-conds-B}
\end{align}\end{subequations}
where
	\be\label{eq:varpi-defn}
		\delta\varpi_{AB} \defn \ms D_{[A}\lb( p_{B]r} + \pi_r^\Lambda\phi^C_\Lambda q_{B]C} \rb) \, ,
	\ee
and ``$\delta J$'' in \autoref{eq:gauge-conds-adm} includes all of the angular momenta, not just the ones conjugate to $\phi^a_\Lambda$. The conditions at $B$ expressed in \autoref{eq:gauge-conds-B} take the explicit form
\begin{subequations}\label{eq:gauge-conds-B-full}\begin{align}
	0 = \delta\varepsilon(\mc S\delta X)\vert_B & = -\tfrac{1}{2}{p^A}_A \label{eq:gauge-conds-B-area} \\
	0 = \delta\varpi_{AB} (\mc S\delta X)\vert_B & = \ms D_{[A}\lb( q_{B]r} - \pi_r^\Lambda \phi^C_\Lambda p_{B]C} \rb) \label{eq:gauge-conds-B-varpi}
\end{align}\end{subequations}
Thus, our gauge conditions on $\delta X$ correspond to the conditions \autoref{linconst}, \autoref{admcond}, and \autoref{eq:bound-defn} with $p_{ab} \to q_{ab}$ and $q_{ab} \to - p_{ab}$ {\it except} that the conditions requiring the vanishing of the expansions, $\delta \vartheta_{\rm odd}$ and $\delta \vartheta_{\rm even}$, in \autoref{eq:bound-defn-expodd} and \autoref{eq:bound-defn-expeven} are deleted and replaced by \autoref{eq:gauge-conds-B-varpi}, and the condition $\delta J_\Lambda = 0$ is replaced by $\delta J = 0$.

Let $\ms V_g$ denote the subspace of $\ms P_{\rm gr}$ that is orthogonal to $\ms W_g$. Then elements of $\ms V_g$ weakly satisfy our gauge conditions \autoref{eq:gauge-conds}. Let $\Pi_g: \ms P_{\rm gr} \to \ms P_{\rm gr}$ denote the orthogonal projection operator onto $\ms V_g$. It is shown in Appendix \ref{sec:proj-ops} that, as for $\Pi_c$, we have
\be
\ms V_g^\infty  \defn \ms V_g \cap \ms P_{\rm gr}^\infty = \Pi_g  [\ms P_{\rm gr}^\infty] \, .
\ee
Furthermore, as for $\Pi_c$, we have that $\Pi_g: \ms P_{\rm gr}^\infty \to \ms V_g^\infty$ is continuous in the (Fr\'{e}chet) topology of $\ms P_{\rm gr}^\infty$. 

Now, by construction, we have $\ms W_g \subset \ms V_c^\infty$, since the gauge transformations in $\ms W_g$ were chosen to be precisely the gauge transformations in $\ms P_{\rm gr}^\infty$ that lie in $\ms V_c^\infty$. Taking perp-spaces, we find that $\ms V_g \supset (\ms V_c^\infty)^\perp = (\ms V_c)^\perp$, where the last equality follows from \autoref{dense}. This implies that $(I-\Pi_c) = \Pi_g (I-\Pi_c) = (I-\Pi_c) \Pi_g$, which, in turn, implies that $\Pi_c$ and $\Pi_g$ commute
\be
\Pi_c \Pi_g = \Pi_g \Pi_c \, .
\ee

The space of interest for us is
\be
\ms V^\infty \defn \Pi_g \Pi_c [\ms P_{\rm gr}^\infty] = \ms V_g^\infty \cap \ms V_c^\infty \, .
\ee
This space contains a unique gauge representative of every element of $\ms V_c^\infty$. In the remainder of this paper, we will analyze dynamical stability for initial data in the space $\ms V^\infty$.


\section{Kinetic and Potential Energies; Positivity of Kinetic Energy}\label{sec:energies}

As discussed in the Introduction, we can use the ($t$-$\phi$)-reflection isometry, $i$, of the background stationary black hole to decompose an arbitrary perturbation into its ``odd'' and ``even'' parts, $P$ and $Q$, under the action of $i$. Let $\Sigma$ be a ($t$-$\phi$)-reflection symmetric Cauchy surface for the exterior. We decompose the space of initial data, $\ms V^\infty$, on $\Sigma$ into \(L^2\)-orthogonal parts \(\ms V^\infty = \ms V^\infty_{\rm odd} \oplus \ms V^\infty_{\rm even} \) as follows. If the background is static, then $i$ is purely a ``$t$-reflection,'' and we define \(P \defn \lb( p_{ab}, 0 \rb) \in \ms V^\infty_{\rm odd} \) and \(Q \defn \lb( 0, q_{ab} \rb) \in \ms V^\infty_{\rm even} \). In the stationary and axisymmetric case, we first decompose \(p_{ab}\) and \(q_{ab}\) into their ``axial" and ``polar" parts with respect to the axial Killing fields \(\phi^a_\Lambda\) as follows\footnote{Note that this is a purely local decomposition into the parts that are parallel and orthogonal to the axial Killing fields.}:
\be\begin{split}
	p_{ab} & = 2\lambda_{(a}^\Lambda\phi_{b)\Lambda} + \beta_{ab} + \gamma^{\Lambda\Theta} \phi_{a\Lambda}\phi_{b\Theta} \\
	q_{ab} & = 2\alpha_{(a}^\Lambda\phi_{b)\Lambda} + \mu_{ab} + \nu^{\Lambda\Theta} \phi_{a\Lambda}\phi_{b\Theta}
\end{split}\ee
with \(\alpha_a^\Lambda\phi^a_\Theta = 0 = \lambda_a^\Lambda\phi^a_\Theta\); $\beta_{ab} = \beta_{(ab)}$, $\mu_{ab} = \mu_{(ab)}$; $\beta_{ab}\phi^a_\Lambda = 0 = \mu_{ab}\phi^a_\Lambda$; and \(\gamma^{\Lambda\Theta} = \gamma^{(\Lambda\Theta)}\), \(\nu^{\Lambda\Theta} = \nu^{(\Lambda\Theta)}\). Then the ($t$-$\phi$)-reflection odd and even parts of the perturbation respectively are:
\be\label{eq:tphi-decomp}\begin{split}
	P & \defn \lb( \beta_{ab} + \gamma^{\Lambda\Theta} \phi_{a\Lambda}\phi_{b\Theta},~ - 2\alpha_{(a}^\Lambda\phi_{b)\Lambda} \rb) \in \ms V^\infty_{\rm odd} \\
	Q & \defn \lb( 2\lambda_{(a}^\Lambda\phi_{b)\Lambda},~ \mu_{ab} + \nu^{\Lambda\Theta} \phi_{a\Lambda}\phi_{b\Theta} \rb) \in \ms V^\infty_{\rm even}
\end{split}\ee
The negative sign in front of the \(\alpha\) in the definition of  \(P\) was chosen so that the transformation $(p,q) \mapsto (P,Q)$ is canonical with respect to \(\mc S\) defined in \autoref{eq:symplectic-op-defn}. 

Since the linearized constraint equations as well as our boundary conditions and gauge conditions are invariant under $i$, they cannot couple $P$ and $Q$, so $P$ and $Q$ may be viewed as independent perturbations, i.e., if $(P,Q)$ is a perturbation satisfying the constraints, boundary conditions, and gauge conditions, then $(P,0)$ and $(0,Q)$ also satisfy the constraints, boundary conditions, and gauge conditions. 

The {\em canonical energy} is the map $\ms E : \ms V^\infty \times \ms V^\infty \to \mathbb{R}$ defined by
\be
\ms E (\widetilde{\delta X},\delta X) = \Omega_\Sigma \lb(\widetilde{\delta X},\Lie_t \delta X \rb)
\label{canendef}
\ee
Although the definition of $\ms E$ is not manifestly symmetric in $\widetilde{\delta X}$ and $\delta X$, it is, in fact, easily seen to be symmetric (see Prop.2 of \cite{HW-stab}). An explicit formula for $\ms E$ can be found in Eq.86 of \cite{HW-stab}. Now, $\ms E$ is constructed from the background spacetime, so it is invariant under the reflection isometry, $i$, in the sense that for any perturbations $\widetilde{\delta X}$ and $\delta X$ in \(\ms V^\infty\), we have ${\ms E} (i^* \widetilde{\delta X}, i^* \delta X) = {\ms E} (\widetilde{\delta X}, \delta X)$. It follows that under the decomposition \(\ms V^\infty = \ms V^\infty_{\rm odd} \oplus \ms V^\infty_{\rm even}\), \(\ms E\) cannot contain any ($P$-$Q$)-cross-terms. Thus, $\ms E$ splits up into two quadratic forms $\ms K: \ms V^\infty_{\rm odd} \times \ms V^\infty_{\rm odd} \to {\mathbb R}$ and $\ms U: \ms V^\infty_{\rm even} \times \ms V^\infty_{\rm even} \to {\mathbb R}$ such that
\be
\ms E[(\tilde{P}, \tilde{Q}), (P,Q)] = {\ms K} (\tilde{P}, P) + {\ms U} (\tilde{Q}, Q) \, ,
\ee
where
\be
{\ms K} (\tilde{P}, P) = \ms E[(\tilde{P}, 0), (P,0)] 
\label{KE}
\ee
\be
{\ms U} (\tilde{Q}, Q) = \ms E[(0, \tilde{Q}), (0,Q)] \, .
\ee
We refer to $\ms K$ and $\ms U$, respectively, as the \emph{kinetic energy} and \emph{potential energy} of the perturbation.

In the simple case of a static background, where $P = (p_{ab}, 0)$, it follows immediately from Eq.86 of \cite{HW-stab} with \(q_{ab} = 0\) that
\be\label{eq:KE-static}
	\ms K = 2\int_\Sigma N \lb[ p_{ab}^2 - \tfrac{1}{d-1}p^2\rb]
\ee
This expression is not manifestly positive definite since the second term in the integrand may dominate the first. However, $p_{ab}$ is not arbitrary but is subject to the constraint
\be
D^b p_{ab} = 0
\ee
as well as the boundary condition $p_{rr} = 0$ (see \autoref{eq:bound-defn-expodd}).

In the general case of a stationary-axisymmetric black hole, the kinetic energy is obtained by evaluating Eq.86 of \cite{HW-stab} for a perturbation of the form
\be\label{Ppert}
p_{ab} = \beta_{ab} + \gamma^{\Lambda\Theta} \phi_{a\Lambda}\phi_{b\Theta} \eqsp q_{ab} =  2\alpha_{(a}^\Lambda\phi_{b)\Lambda}
\ee
Using the background constraints and integrating by parts using \autoref{eq:bound-defn}, we can write the kinetic energy in the form
\be\begin{split}
\label{Kgen}
	\ms K & =  \int_\Sigma N \lb[\frac{1}{2}(D_cq_{ab})^2 -  D_cq_{ab}D^aq^{cb} \rb] + 2\int_\Sigma N \lb[ p_{ab}^2  -\frac{1}{d-1} \lb( p + \tfrac{1}{\sqh}~ q_{ab}\pi^{ab} \rb)^2 \rb.   \\
		& \quad \lb. + 4 \lb( p_{ab} + \tfrac{1}{\sqh}~q_{c(a}\pi^c_{b)} \rb)\tfrac{1}{\sqh}~ q^{ad}\pi_d^b  + \tfrac{1}{h}~ q_{cd}\pi^c_a\pi^d_b q^{ab} \rb] + 2\int_\Sigma \lb( p^{ab} + \tfrac{1}{\sqh}~2q^{ad}\pi^b_d \rb) \Lie_N q_{ab}  \, .
\end{split}\ee
The linearized constraints take the form
\be
	\mf{c}_a = D^bp_{ab} - \pi^{b\Lambda}\lb( 2D_{[a}\alpha_{b]\Lambda}  + \alpha_a^\Theta D_b\Phi_{\Lambda\Theta} \rb) = 0 \, ,
\ee
where \autoref{eq:Dphi} was used and, again, the form \autoref{Ppert} for $p_{ab}$ is understood. The linearized Hamiltonian constraint \(\mf c = 0\) is identically satisfied for perturbations of the form \autoref{Ppert}.

We now rewrite \autoref{Kgen} in a more explicit and useful form as follows. With the substitution \autoref{Ppert}, the first term in the integrand of \autoref{Kgen} becomes
\be\begin{split}\label{K1}
	{\ms K}_1 &\equiv \int_\Sigma N \lb[\frac{1}{2}(D_cq_{ab})^2 -  D_cq_{ab}D^aq^{cb} \rb] \\
		&= 2\int_\Sigma N \Phi^{\Lambda\Theta}\lb[(D_{[a}\alpha_{b]\Lambda})(D^{[a}\alpha^{b]}_\Theta) - \frac{1}{2} (\alpha^{a\Xi}D_a\Phi_{\Lambda\Gamma})(\alpha^{b\Gamma}D_b\Phi_{\Theta\Xi}) + (\alpha^{a\Xi} D^b\Phi_{\Lambda\Xi} )(D_a\alpha_{b\Theta}) \rb]   \, .
\end{split}\ee
The last term of \autoref{K1} can be written as
\be
		2\int_\Sigma N \Phi^{\Lambda\Theta}(\alpha^{a\Xi} D^b\Phi_{\Lambda\Xi} )(D_a\alpha_{b\Theta}) = 2\int_\Sigma N\Phi^{\Lambda\Theta} \lb[ (\alpha^{a\Xi} D^b\Phi_{\Lambda\Xi} )(D_b\alpha_{a\Theta}) + 2(\alpha^{a\Xi} D^b\Phi_{\Lambda\Xi} )(D_{[a}\alpha_{b]\Theta}) \rb]
\label{K13}
\ee
This expression can be simplified by using the relation
\be
R_{ab}\phi^b_\Lambda = -\tfrac{1}{2} \phi_{a\Theta} D^b\lb(  \Phi^{\Theta\Xi} D_b\Phi_{\Lambda\Xi} \rb)
\ee
(which holds by virtue  of $\phi^a_\Lambda$ being Killing fields) and then eliminating $R_{ab}$ using the background ADM equation \autoref{eq:Ric-id} to obtain
\be
		D^a\lb( N\Phi^{\Lambda\Xi}D_a\Phi_{\Theta\Xi} \rb) = -4N\pi_a^\Lambda\pi^a_\Theta \, .
\ee
Using this relation, we simplify \autoref{K13} as follows:
\be\begin{split}
		& 2\int_\Sigma N\Phi^{\Lambda\Theta} (\alpha^{a\Xi} D^b\Phi_{\Lambda\Xi} )(D_b\alpha_{a\Theta}) = -2 \int_\Sigma \lb[ D^b\lb( N\Phi^{\Lambda\Theta}D_b \Phi_{\Lambda\Xi} \rb)\alpha^{a\Xi}\alpha_{a\Theta} + N \alpha_a^\Lambda D^b\alpha^{a\Xi}  D_b\Phi_{\Lambda\Xi} \rb]\\
		& = 8\int_\Sigma N \lb(\alpha^{a\Lambda}\alpha_a^\Theta \rb)\lb(\pi_{a\Lambda}\pi^a_\Theta \rb) - 2\int_\Sigma N\Phi^{\Lambda\Theta} (\alpha^{a\Xi} D^b\Phi_{\Lambda\Xi} )(D_b\alpha_{a\Theta}) -2\int_\Sigma \alpha^{a\Lambda}\alpha_{a\Theta}D_b \Phi^{\Theta\Xi}D^b\Phi_{\Lambda\Xi}  \\
		& = 4\int_\Sigma N \lb(\alpha^{a\Lambda}\alpha_a^\Theta \rb)\lb(\pi_{a\Lambda}\pi^a_\Theta \rb) - \int_\Sigma \alpha^{a\Lambda}\alpha_{a\Theta}D_b \Phi^{\Theta\Xi}D^b\Phi_{\Lambda\Xi}
	\end{split}\ee
Thus, we obtain
	\[\begin{split}
 \ms K_1 & = 2\int_\Sigma N \Phi^{\Lambda\Theta}\lb[(D_{[a}\alpha_{b]\Lambda} + \alpha_{[a}^\Xi D_{b]}\Phi_{\Lambda\Xi})(D^{[a}\alpha^{b]}_\Theta + \alpha^{\Gamma[a}D^{b]}\Phi_{\Theta\Gamma}) \rb] + 4\int_\Sigma N \lb(\alpha^{a\Lambda}\alpha_a^\Theta \rb)\lb(\pi_{a\Lambda}\pi^a_\Theta \rb) \\
			& = 2\int_\Sigma N \Phi_{\Lambda\Theta}(D_{[a}\alpha_{b]}^\Lambda)(D^{[a}\alpha^{b]\Theta}) + 4\int_\Sigma N \lb(\alpha^{a\Lambda}\alpha_a^\Theta \rb)\lb(\pi_{a\Lambda}\pi^a_\Theta \rb) 
	\end{split}\]
To simplify the remaining terms in \autoref{Kgen} we use
\be
\sqh q_{ca}\pi^c_b = \lb( \alpha_a^\Lambda\pi_{b\Lambda} + \alpha_c^\Lambda\pi^{c\Theta} \phi_{a\Lambda}\phi_{b\Theta} \rb) \, .
\ee
In addition,
\be
 \Lie_N q_{ab} = \bar N^\Lambda\Lie_{\phi_\Lambda} q_{ab} + 2q_{c(a}\phi^c_\Lambda D_{b)}\bar N^\Lambda = - 4N\alpha_{(a}^\Lambda\pi_{b)\Lambda}
\ee
where \(\Lie_{\phi_\Lambda}q_{ab} = 0 \) by axisymmetry, and \autoref{eq:shift-id} was used. We obtain
\be\begin{split}
\ms K_2 \equiv &~ 2\int_\Sigma N \lb[ p_{ab}^2  -\frac{1}{d-1} \lb( p + \tfrac{1}{\sqh}~ q_{ab}\pi^{ab} \rb)^2 \rb.  \\
		&\quad \lb. + 4 \lb( p_{ab} + \tfrac{1}{\sqh}~q_{c(a}\pi^c_{b)} \rb)\tfrac{1}{\sqh}~ q^{ad}\pi_d^b  + \tfrac{1}{h}~ q_{cd}\pi^c_a\pi^d_b q^{ab} \rb] + 2\int_\Sigma \lb( p^{ab} + \tfrac{1}{\sqh}~2q^{ad}\pi^b_d \rb) \Lie_N q_{ab}  \\
		= &~ 2\int_\Sigma N \lb[ \beta_{ab}^2 + \lb( \gamma^{\Lambda\Theta} + 2\alpha_a^{(\Lambda}\pi^{\Theta) a} \rb)\lb( \gamma_{\Lambda\Theta} + 2\alpha_{a(\Lambda}\pi^a_{\Theta)} \rb)   -\tfrac{1}{d-1}\lb( \beta + \gamma^\Lambda_\Lambda + 2\alpha_a^\Lambda\pi^a_\Lambda \rb)^2 \rb] \\
	&\quad - 4\int_\Sigma N \lb(\alpha^{a\Lambda}\alpha_a^\Theta \rb)\lb(\pi_{a\Lambda}\pi^a_\Theta \rb)
\end{split}\ee
Putting these results together, we obtain our final expression for $\ms K$
\be\begin{split}\label{eq:KE-station}
		\ms K & = 2\int_\Sigma N \lb[ \Phi_{\Lambda\Theta}(D_{[a}\alpha_{b]}^\Lambda)(D^{[a}\alpha^{b]\Theta}) + \beta_{ab}^2 + \lb( \gamma^{\Lambda\Theta} + 2\alpha_a^{(\Lambda}\pi^{\Theta) a} \rb)\lb( \gamma_{\Lambda\Theta} + 2\alpha_{a(\Lambda}\pi^a_{\Theta)} \rb) \rb. \\
		&\qquad\qquad \lb. -\frac{1}{d-1}\lb( \beta + \gamma^\Lambda_\Lambda + 2\alpha_a^\Lambda\pi^a_\Lambda \rb)^2 \rb] 
	\end{split}\ee

We are now in a position to state and prove our first theorem:

\begin{thm} [Positivity of kinetic energy] \label{thm:positive-KE}
For axisymmetric perturbations of a stationary-axisymmetric black hole background, the kinetic energy \(\ms K\), \autoref{eq:KE-station}, is a positive-definite symmetric bilinear form on \(\ms V^\infty_{\rm odd}\).
\end{thm}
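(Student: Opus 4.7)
The expression \autoref{eq:KE-station} already displays $\ms K$ as the integral over $\Sigma$ of $N$ times four terms: three are pointwise non-negative---$\Phi_{\Lambda\Theta}(D_{[a}\alpha^\Lambda_{b]})(D^{[a}\alpha^{b]\Theta})$, $\beta_{ab}^2$, and $(\gamma^{\Lambda\Theta}+2\alpha^{(\Lambda}_a\pi^{\Theta)a})(\gamma_{\Lambda\Theta}+2\alpha_{a(\Lambda}\pi^a_{\Theta)})$---while the fourth, $-\tfrac{1}{d-1}(\beta+\gamma^\Lambda_\Lambda+2\alpha_a^\Lambda\pi^a_\Lambda)^2$, is pointwise non-positive. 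The entire content of the theorem is therefore that the negative ``trace-squared'' term is dominated after integration by the three positive contributions. It is useful to repackage by defining $\hat p_{ab}\defn\beta_{ab}+(\gamma^{\Lambda\Theta}+2\alpha^{(\Lambda}_a\pi^{\Theta)a})\phi_{a\Lambda}\phi_{b\Theta}$, with trace $\hat p=\beta+\gamma^\Lambda_\Lambda+2\alpha_a^\Lambda\pi^a_\Lambda$, so that the integrand reads $\Phi_{\Lambda\Theta}(D_{[a}\alpha^\Lambda_{b]})^2+\hat p_{ab}\hat p^{ab}-\tfrac{1}{d-1}\hat p^2$. A quick algebraic check shows $\hat p_{ab}\hat p^{ab}-\tfrac{1}{d-1}\hat p^2$ is pointwise indefinite (it is essentially the DeWitt supermetric, with one negative ``trace'' direction), so positivity must be integral and must crucially use the linearized momentum constraint and the horizon condition $p_{rr}|_B=0$.

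The strategy I would take is to use the momentum constraint
\[
\mf c_a = D^bp_{ab}-\pi^{b\Lambda}\bigl(2D_{[a}\alpha_{b]\Lambda}+\alpha_a^\Theta D_b\Phi_{\Lambda\Theta}\bigr)=0
\]
to convert the trace-squared term into squares of derivatives of $\alpha^\Lambda_a$. Since $\mf c_a = 0$, one may add to $\ms K$ any multiple $2\int_\Sigma N V^a\mf c_a$ without changing its value. Integrating the $V^a D^bp_{ab}$ piece by parts against $\hat p^{ab}$ and choosing $V^a$ proportional to a symmetric derivative of the ``trace-shifted'' tensor $\hat p_{ab}-\tfrac{1}{d-1}\hat p\,h_{ab}$ lets one complete a square in which the cross-terms are absorbed by the $\Phi(D\alpha)^2$ contribution and the $\hat p_{ab}^2$ contribution. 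Execution of this completion of squares relies on the background identities \autoref{eq:Ric-id} and \autoref{eq:shift-id} (to eliminate $R_{ab}$ and $D\bar N^\Lambda$ terms produced by the integration by parts), and on the axial Killing identity \autoref{eq:Dphi} (to handle $D\phi$ terms arising from $\hat p_{ab}$'s dependence on the axial Killing fields). Boundary contributions at $B$ vanish because $N|_B=0$ combined with $p_{rr}|_B=0$ kills the integrand of $\int_B$, and contributions at infinity vanish by the weighted Sobolev decay of elements of $\ms P^\infty_{\rm gr}$.

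To upgrade non-negativity to positive-definiteness, the plan is to trace through the equality case. If $\ms K(P,P)=0$ then each manifestly non-negative piece in the final sum-of-squares form must vanish pointwise, forcing $D_{[a}\alpha^\Lambda_{b]}=0$, $\beta_{ab}=0$, $\gamma^{\Lambda\Theta}+2\alpha^{(\Lambda}_a\pi^{\Theta)a}=0$, and the completion-of-squares remainder as well. The vanishing of $D_{[a}\alpha^\Lambda_{b]}$ says each $\alpha^\Lambda_a$ is locally exact, and the gauge-fixing conditions defining $\ms V^\infty_{\rm odd}$ together with the asymptotic fall-off force $\alpha^\Lambda_a\equiv 0$. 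The remaining two conditions then immediately give $\gamma^{\Lambda\Theta}=0$ and $\beta_{ab}=0$, so $P=0$, establishing definiteness.

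The hard part will be the algebra of the stationary-axisymmetric case. In the static limit $\pi^{ab}=0$ makes all the $2\alpha\pi$ cross-terms drop out and the problem collapses to the comparatively clean inequality $\int_\Sigma N\bigl(p_{ab}^2-\tfrac{1}{d-1}p^2\bigr)\ge 0$ for divergence-free $p_{ab}$ with $p_{rr}|_B=0$---a Reilly-type identity on the static slice whose proof is the skeleton of the general argument. In the full stationary case the axial/polar mixing generated by $\pi^{ab}\neq 0$ forces one to coordinate the momentum constraint, the ADM background identities, and the axisymmetric identities for $\phi^a_\Lambda$ and $\Phi_{\Lambda\Theta}$ simultaneously, and arranging that all the cross-terms cancel cleanly into manifest squares is the genuine technical challenge.
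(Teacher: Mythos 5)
Your reading of the structure of \autoref{eq:KE-station} is correct---the only obstruction is the trace-squared term, and positivity must invoke the momentum constraint and the horizon condition---but the mechanism you propose for absorbing that term cannot work. You want to add $2\int_\Sigma N V^a \mf c_a$ with a \emph{local} multiplier ($V^a$ built from a symmetric derivative of the trace-shifted tensor) and then complete squares pointwise. Already in the simplest model---flat slice, $N=1$, $\pi^{ab}=0$---no such local certificate exists: for divergence-free $p_{ab}$ the York decomposition gives $\int (p_{ab}^2 - \tfrac{1}{d-1}p^2) = \int p^{TT}_{ab}p^{TT\,ab}$, where $p^{TT}_{ab}$ is the transverse-\emph{traceless} projection, and that projection requires inverting a Laplacian. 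Concretely, test any putative identity $F(p) + \mathrm{Re}\big(\overline{V^a}\,\partial^b p_{ab}\big) + \mathrm{div} = \sum_j |L_j(p)|^2$ (all operators local) on plane waves $p_{ab}=c_{ab}e^{ik\cdot x}$: the constraint term always carries at least one power of $k$, so in the limit $k\to 0$ the identity would force $|c_{ab}|^2-\tfrac{1}{d-1}|c|^2\geq 0$ for \emph{all} $c_{ab}$, which fails for $c_{ab}=\delta_{ab}$. A multiplier that repairs the form uniformly in $|k|$ must scale like $|k|^{-1}$, i.e., must be an inverse elliptic operator applied to the data. The nonlocal elliptic solve is therefore not a technical detail to be arranged later; it is the missing idea.

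The paper supplies exactly that step, packaged as a gauge transformation rather than a constraint multiplier: one solves the elliptic boundary value problem $(\triangle - 2\pi_a^\Lambda\pi^a_\Lambda)\xi = \tfrac{1}{d-1}(\beta+\gamma^\Lambda_\Lambda+2\alpha_a^\Lambda\pi^a_\Lambda)$ with $\xi\vert_B=0$ and decay at infinity, and sets $\hat P \defn P - \mc S^*\mc L^*(\xi,0)$---the normal displacement of $\Sigma$ to a perturbed maximal slice. By construction $\hat\beta+\hat\gamma^\Lambda_\Lambda+2\hat\alpha^\Lambda_a\pi^a_\Lambda=0$, so the negative term is not dominated but \emph{eliminated}, and $\ms K(\hat P,\hat P)$ is a manifest sum of nonnegative terms, \autoref{eq:K-station-pos}. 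The constraint and the condition $p_{rr}\vert_B=0$ enter only through the gauge invariance $\ms K(P,P)=\ms K(\hat P,\hat P)$: the $\xi$-$\xi$ and $\xi$-$\hat p$ cross-terms cancel upon integration by parts because $\xi\vert_B=0$ and $\hat p_{rr}\vert_B=0$, not (as you suggest) because $N\vert_B=0$; note $d_rN\vert_B=\surgrav\neq 0$. Finally, your equality-case analysis is phrased in the un-hatted variables, where it does not directly apply: the vanishing of the squares gives $D_{[a}\hat\alpha^\Lambda_{b]}=0$, etc., for the \emph{hatted} data, whence $\hat P$, and therefore $P$, is pure gauge, $P=\mc S^*\mc L^*(\xi,-\bar\xi^\Lambda\phi^a_\Lambda)\in\ms W_g$; combined with $\ms V^\infty_{\rm odd}\subset\ms W_g^\perp$ this yields $P=0$.
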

\begin{proof} It is instructive to first give the proof for the case of a static background, where $i$ is given by a $t$-reflection and the kinetic energy is given by the much simpler expression \autoref{eq:KE-static}. In that case, let \(\xi\) be the solution to the following boundary value problem (see \cite{Chr-De} and \cite{Can-Die}):
\be\label{eq:xi-defn}
\triangle \xi = \frac{1}{d-1}p \eqsp \xi\vert_B = 0 \eqsp \xi \sim O(1/\rho^{d-3})\vert_\infty \, \, .
\ee
Define \( \hat P = \lb(\hat p_{ab}, 0\rb) \in \ms P_{\rm gr}^\infty \) by 
\be
 \hat P \defn P - \mc S^*\mc L^* \lb(\xi, 0 \rb)
\ee
where \(\mc L^*\) was defined by \autoref{eq:L*-defn}, i.e., we define
\be
\hat{p}_{ab} \defn p_{ab}  -  D_aD_b\xi + h_{ab}\triangle\xi + R_{ab} \xi
\label{phat}
\ee
where we have used the fact that for the static background, we have $\pi^{ab} = 0$ and $R=0$. Note that $\hat p_{ab}$ satisfies the constraint $D^a \hat{p}_{ab} = 0$ and the boundary condition $\hat p_{rr}|_B = 0$ (see \autoref{eq:bound-defn-expodd}). Note also that $\hat p = \hat p^a{}_a = 0$. We use \autoref{phat} to eliminate $p_{ab}$ in favor of $\hat p_{ab}$ and $\xi$ in \autoref{eq:KE-static}. Integrating by parts and using the boundary conditions on \(\xi\) (from \autoref{eq:xi-defn}) and \(\hat p_{rr}\vert_B = 0\), we find that the $\xi$-$\xi$ terms and the $\xi$-$\hat p$ cross-terms vanish. (This is essentially the same calculation as done in \autoref{eq:bound-integral} and is an expression of the gauge invariance of $\ms K$ with respect to gauge transformations that respect the boundary conditions 
\autoref{eq:bound-defn}.) Thus, we obtain
\be\label{eq:K-static-pos}
\ms K = 2\int_\Sigma N \hat p^{ab} \hat p_{ab} \geq 0 
\ee
Thus, $\ms K$ is non-negative and vanishes if and only if \(\hat p_{ab}= 0 \). However, if $\hat p_{ab} = 0$, then \( P = \mc S^*\mc L^* \lb(\xi, 0 \rb) \in \ms W_g \). But \(P \in \ms V^\infty_{\rm odd} \subset \ms V \subset \ms W_g^\perp\). Thus, \(\ms K = 0\) if and only if \(P = 0\).

The proof proceeds in an exactly parallel manner for a general stationary-axisymmetric background, where $\ms K$ is given by \autoref{eq:KE-station}. We now let \(\xi\) be the solution to the boundary value problem
\be
\lb(\triangle - 2\pi_a^\Lambda\pi^a_\Lambda\rb)\xi = \tfrac{1}{d-1}\lb( \beta + \gamma^\Lambda_\Lambda + 2\alpha_a^\Lambda\pi^a_\Lambda \rb) \eqsp \xi\vert_B = 0 \eqsp \xi \sim O(1/\rho^{d-3})\vert_\infty
\ee
We again define \(\hat P  \in \ms P_{\rm gr}^\infty\) by 
\be
 \hat P \defn P - \mc S^*\mc L^* \lb(\xi,~ 0 \rb) 
\ee
and we decompose $\hat P$ as
\be
\hat P = \lb( \hat\beta_{ab} + \hat\gamma^{\Lambda\Theta} \phi_{a\Lambda}\phi_{b\Theta} ,~ - 2\hat\alpha_{(a}^\Lambda\phi_{b)\Lambda} \rb) \, .
\ee
By a direct computation using the above definitions, we find \( \hat\beta + \hat\gamma^\Lambda_\Lambda + 2\hat\alpha_a^\Lambda\pi^a_\Lambda = 0 \). In parallel with the arguments in the static case we obtain \( \ms K(P, P) = \ms K(\hat P, \hat P)\). Hence, we obtain
\be\label{eq:K-station-pos}
		\ms K = 2\int_\Sigma N \lb[ \Phi_{\Lambda\Theta}(D_{[a}\hat\alpha_{b]}^\Lambda)(D^{[a}\hat\alpha^{b]\Theta}) + \hat\beta_{ab}^2 + \lb( \hat\gamma^{\Lambda\Theta} + 2\hat\alpha_a^{(\Lambda}\pi^{\Theta) a} \rb)\lb( \hat\gamma_{\Lambda\Theta} + 2\hat\alpha_{a(\Lambda}\pi^a_{\Theta)} \rb) \rb] \geq 0
\ee
which shows that \(\ms K\) is non-negative. Furthermore, $\ms K$ vanishes if and only if \( \hat\alpha_a^\Lambda = D_a\bar\xi^\Lambda \), \(\hat\beta_{ab} = 0\) and \( \hat\gamma^{\Lambda\Theta} = -2D_a\bar\xi^{(\Lambda}\pi^{\Theta)a}  \). This means \(\ms K\) vanishes iff \( \hat P = \mc S^*\mc L^*\lb(0, -\bar\xi^\Lambda\phi^a_\Lambda \rb)\) i.e. \( P = \mc S^*\mc L^* \lb(\xi, -\bar\xi^\Lambda\phi^a_\Lambda \rb) \in \ms W_g \). But \(P \in \ms V^\infty_{\rm odd} \subset \ms V \subset \ms W_g^\perp\). So, \(\ms K = 0\) if and only if \(P = 0\), i.e., the kinetic energy \(\ms K\) is a positive-definite, symmetric, quadratic form on \(\ms V^\infty_{\rm odd}\), as we desired to show.

\end{proof}

\noindent
{\bf Remarks:} 

\begin{enumerate}

\item As shown in \cite{SW-tphi}, a reflection isometry, $i$, exists for any subgroup of stationary-axisymmetric isometries that acts trivially and is such that its generators include the stationary Killing field and the horizon Killing field. If more than one choice of $i$ exists, then \autoref{thm:positive-KE} implies correspondingly stronger results. In particular, if one has a static, axisymmetric black hole, one has both a $t$-reflection and a ($t$-$\phi$)-reflection isometry. If we define $\ms K$ and $\ms U$ with respect to the $t$-reflection isometry, then theorem 1 tells us that $\ms K$ is positive definite. However, if we apply theorem 1 to the ($t$-$\phi$)-reflection isometry, we learn that, in addition, the potential energy, $\ms U$, is also positive definite on ``axial'' metric perturbations, i.e., metric perturbations of the form $q_{ab} = 2\alpha_{(a}^\Lambda\phi_{b)\Lambda}$.

\item The transformation $P \to \hat P = P - \mc S^*\mc L^* \lb(\xi,~ 0 \rb)$ is just a gauge transformation corresponding to making a normal displacement of the hypersurface $\Sigma$ by $\xi$. The condition \( \hat\beta + \hat\gamma^\Lambda_\Lambda + 2\hat\alpha_a^\Lambda\pi^a_\Lambda = 0 \) is just the condition that $\hat{\delta} \pi =  \sqh \hat p + \hat q_{ab}\pi^{ab} = 0$. Thus, writing the expression for $\ms K$ in terms of $\hat P$ corresponds to working in a gauge\footnote{Note, however, that this gauge choice is not compatible with the gauge conditions that we imposed in \autoref{eq:gauge-conds}.} where $\Sigma$ is a maximal slice in the perturbed spacetime. The fact that $\ms K(P, P) = \ms K(\hat P, \hat P)$ is a manifestation of the gauge invariance of $\ms K$. 

\item {\bf Local Penrose Inequality}: As shown in \cite{HW-stab}, positivity of the canonical energy $\ms E$ on $\ms V^\infty$ is equivalent to the satisfaction of a local Penrose inequality. Consider perturbations of a Schwarzschild black hole with $d \leq 7$, with $i$ chosen to be the $t$-reversal isometry. The fact that the Riemannian Penrose inequality holds \cite{Bray-Lee} implies that $\ms U$ is positive. However, since there do not exist nontrivial stationary perturbations of Schwarzschild with $\delta M = \delta J = 0$ \cite{Carter}, \cite{IK-stab}, it follows that $\ms U$ is non-degenerate \cite{HW-stab}. Thus, $\ms U$ is positive definite.
Since we have just shown that $\ms K$ is positive definite, it follows that $\ms E$ is positive definite. This implies that the Schwarzschild black hole is a strict local minimum of mass at fixed area of the apparent horizon.

\end{enumerate}


\section{Time Evolution and the Inverse Kinetic Energy Hilbert Space $\ms H$}\label{sec:evolution}

In this section, we will introduce the time evolution operators, $\mc K$ and $\mc U$, which will be seen to correspond to the quadratic forms, $\ms K$ and $\ms U$, defined in the previous section. We will then use the positive definiteness of $\ms K$ to define a new Hilbert space $\ms H$ that we will use in our stability analysis. The Hilbert space $\ms H$ will {\em not} contain all of $\ms V^\infty_{\rm even}$, but it will contain all solutions of the form $\Lie_t \delta X$ with $\delta X \in \ms V^\infty_{\rm odd}$.

The time evolution of the background with respect to \(t^\mu\) in \autoref{eq:evol-bg} is a gauge transformation generated by the lapse and the shift \( \svp N \equiv \lb( N, N^a \rb)\), i.e. \autoref{eq:evol-bg} takes the form
\be
\dot X = \mc S^*\mc L^*\svp N \, . 
\ee
The evolution equations for smooth perturbations are obtained by linearizing this relation, i.e., 
\be
 \dot{\delta X} = \mc S^* \mc L^* \svp n + \mc S^* \delta (\mc L^*) \svp N 
\ee 
where 
\be
\svp n \defn \lb( \delta N, \delta N^a \rb)
\ee
and \(\delta(\mc L^*)\) denotes the linearization of the operator \(\mc L^*\) given by \autoref{eq:L*-defn}. By a direct computation of $\delta (\mc L^*)$ from \autoref{eq:L*-defn}, we obtain
\be\label{eq:evol-q}\begin{split}
	\dot q_{ab} & = \hat\delta_nh_{ab} + 2N\lb( p_{ab}-\tfrac{1}{d-1}p~ h_{ab} \rb) - \tfrac{1}{\sqh} Nq \lb( \pi_{ab} - \tfrac{1}{d-1}\pi h_{ab} \rb) \\
				&\quad + \tfrac{2}{\sqh}N\lb( 2 q_{c(a}\pi^c_{b)} - \tfrac{1}{d-1}q_{cd}\pi^{cd} h_{ab} - \tfrac{1}{d-1}\pi q_{ab} \rb) + \Lie_N q_{ab}
\end{split}\ee
\be\label{eq:evol-p}\begin{split}
	\dot p_{ab} & = (\tfrac{1}{\sqh}\hat\delta_n\pi^{cd})h_{ac}h_{bd} +\tfrac{1}{2} N \lb( \triangle q_{ab} + D_aD_bq \rb) - \tfrac{1}{2}D^cN \lb( D_aq_{bc} + D_bq_{ac} - D_cq_{ab} + h_{ab} D_cq \rb)\\
		& \quad\, - N\lb( R_{c(ab)d}q^{cd} + {R^c}_{(a}q_{b)c} -h_{ab}R^{cd}q_{cd} \rb) + h_{ab} D^cND^dq_{cd} - ND_{(a}D^cq_{b)c} \\
		&\quad - \lb( NR - \triangle N \rb)q_{ab} + \tfrac{3}{2} Nh_{ab}\lb( -R^{cd}q_{cd} - \triangle q + D^cD^dq_{cd} \rb)	\\
		& \quad + \tfrac{2}{h}Nq\lb( \pi_{ac}\pi^c_b - \tfrac{1}{d-1}\pi\pi_{ab} \rb) - \tfrac{2}{h}N \lb[ q_{cd}\pi^c_a\pi^d_b + 2q_{c(a}\pi_{b)d}\pi^{cd} -\tfrac{1}{d-1}\lb( q_{cd}\pi^{cd}\pi_{ab} + 2 q_{c(a}\pi^c_{b)}\pi \rb)\rb] \\
		& \quad -\tfrac{2}{\sqh}N\lb[ 2p_{c(a}\pi^c_{b)} - \tfrac{1}{d-1}\lb( p\pi_{ab} + \pi p_{ab} \rb) \rb] -\tfrac{1}{2\sqh}q \lb[ D_c\lb( N^c\pi_{ab} \rb) - 2\pi_{c(a}D^cN_{b)}  \rb] \\
		& \quad + D_c\lb[ N^c \lb(p_{ab} + \tfrac{2}{\sqh} q_{d(a}\pi^d_{b)}\rb) \rb] - 2 \lb[ p_{c(a} + \tfrac{1}{\sqh}q_{dc}\pi^d_{(a} + \tfrac{1}{\sqh}\pi^d_c q_{d(a} \rb] D^cN_{b)} - \tfrac{2}{\sqh}\Lie_N q_{d(a}\pi_{b)}^d \\
		& \quad + \tfrac{2}{\sqh}q^c_d \lb[ D_{(c}N_{a)}\pi^d_b + D_{(c}N_{b)}\pi^d_a  \rb]
\end{split}\ee
where
\be\begin{split}
	\begin{pmatrix} \tfrac{1}{\sqh}\hat\delta_n \pi^{ab} \\ \hat\delta_nh_{ab} \end{pmatrix} \defn 
	\mc S^*\mc L^*\begin{pmatrix} n \\ n^a \end{pmatrix}
\end{split}\ee
is just the gauge transformation generated by \(\svp n\). In the above equations, we have used the background constraints and evolution equations, as well as the stationarity (but not axisymmetry) of the background.

The perturbed lapse and shift $\svp n$ are to be chosen so that the solution lies in $\ms V^\infty$. The linearized constraints \autoref{linconst} are preserved under time evolution, and thus \(\mc L\dot{\delta X} = 0\) holds as an identity. Similarly, \(\dot{\delta X}\) identically satisfies the boundary conditions \autoref{eq:bound-defn} as can be checked by an explicit computation. Thus, \(\dot{\delta X} \in \ms V^\infty_c\) and we only need to impose the conditions \autoref{eq:gauge-conds} so that \(\dot{\delta X} \in \ms V^\infty_g\). This requires that $\svp n$ satisfies the equation
\be
\label{eq:n}
\mc L\mc L^*\svp n = - \mc L \delta(\mc L^*)\svp N
\ee
This equation takes the form
\begin{subequations}\label{eq:LL*}\begin{align}
	(d-1) \triangle^2 n  & = \ldots \label{eq:LL*-n} \\
	2 D^bD_{(a}n_{b)}  & = \ldots \label{eq:LL*-na}
\end{align}\end{subequations}
where the dots denote terms of lower derivative order in \(\svp n\) and source terms depending on \(\delta X\). 
Thus, \autoref{eq:n} is of elliptic character. The boundary conditions needed for 
the right hand side of \autoref{eq:evol-q}-\autoref{eq:evol-p} to lie in $\ms V^\infty$ are that \(\mc S^* \mc L^* \svp n \in \ms W_g\) and that, in addition, $\svp n$ satisfy the boundary conditions arising from \autoref{eq:gauge-conds-adm} and \autoref{eq:gauge-conds-B}. The conditions arising from \autoref{eq:gauge-conds-B} take the form
\begin{subequations}\label{eq:bound-conds-LL*}\begin{align}
	0 = \delta\varepsilon(\mc S \dot{\delta X})\vert_B & = \tfrac{1}{2}(d-1) d_r^2 n + \tfrac{1}{2}\lb[(d-2)\ms D^2 + \tfrac{1}{2} \ms R  \rb]n + \ldots + \delta\varepsilon(\delta(\mc L^*)\svp N) \label{eq:bound-conds-LL*-area} \\
	0 = \delta\varpi_{AB}(\mc S \dot{\delta X})\vert_B & = \ms D_{[A} d_r n_{B]} + \ldots + \delta\varpi_{AB}(\delta(\mc L^*)\svp N) \label{eq:bound-conds-LL*-varpi}
\end{align}\end{subequations}
where we have again suppressed the lower derivative terms in \(\svp n\) and kept the source terms depending on \(\delta X\) in symbolic form.

The problem of solving the above system for $\svp n$ corresponds precisely to applying the projection map\footnote{It can be verified that $\mc S^* \delta (\mc L^*) \svp N$ lies in $\ms V_c^\infty$ (see the proof of \autoref{prop:evol} below), so no $\Pi_c$ projection is needed.} $\Pi_g$ to $\mc S^* \delta (\mc L^*) \svp N$. It follows from the results of Appendix \ref{sec:proj-ops} that a unique solution for \(\svp n \) exists satisfying \autoref{eq:n} and the above boundary conditions. Consequently, the evolution equations \autoref{eq:evol-q} and \autoref{eq:evol-p} take the form 
\be
\dot{\delta X} = \mc E (\delta X) \, .
\label{dotdX}
\ee
where $\mc E : \ms V^\infty \to \ms V^\infty$ is a linear map. Note, however, that since we have to solve the elliptic system \autoref{eq:n} for $\svp n$ to determine the action of $\mc E$ on $\delta X$, $\mc E$ is not a local differential operator. 

We have the following proposition:

\begin{prop}\label{prop:evol}
Given any $\delta X_0 \in \ms V^\infty$ there exists a unique solution $\delta X(t) \in \ms V^\infty$ to \autoref{dotdX} that varies smoothly with $t$ in the topology of $\ms V^\infty$ and is such that $\delta X(0) = \delta X_0$.
\end{prop}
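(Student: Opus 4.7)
The plan is to produce $\delta X(t)$ in two stages: first evolve $\delta X_0$ by the ordinary linearised ADM equations with vanishing perturbed lapse and shift, then apply the gauge projection $\Pi_g$ at each time to enforce the gauge conditions. Uniqueness will follow from standard uniqueness for the linearised Einstein equations together with the orthogonality property of $\Pi_g$.

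First I would solve $\dot{\delta\tilde X} = \mc S^* \delta(\mc L^*)\svp N$ on $\Sigma$ with initial datum $\delta\tilde X(0) = \delta X_0$, i.e.\ the linearised ADM system \autoref{eq:evol-q}--\autoref{eq:evol-p} with $\svp n = 0$. This is a linear hyperbolic initial value problem on $\Sigma$, and standard well-posedness theory provides a unique solution $\delta\tilde X(t)$ that depends smoothly on $t$ in the Fr\'echet topology of $\ms P_{\rm gr}^\infty$. Because the flow of $t^\mu$ maps $\Sigma$ to $\Sigma_t$ and vanishes on $B$, the horizon conditions $\delta\varepsilon = \delta\vartheta_\pm = 0$ at $B$ are preserved; preservation of the linearised constraints and of the ADM charges $\delta M = \delta J_\Lambda = \delta P_i = 0$ is standard. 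Hence $\delta\tilde X(t) \in \ms V_c^\infty$ for all $t$.

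Second, I set $\delta X(t) \defn \Pi_g\,\delta\tilde X(t)$. By Appendix \ref{sec:proj-ops}, $\Pi_g$ is continuous on $\ms P_{\rm gr}^\infty$ with image in $\ms V_g^\infty$; using that $\Pi_c$ and $\Pi_g$ commute and $\Pi_c\delta\tilde X(t) = \delta\tilde X(t)$, one finds $\delta X(t) \in \ms V_c^\infty \cap \ms V_g^\infty = \ms V^\infty$. Since $\delta X_0 \in \ms V^\infty$ already satisfies $\Pi_g\delta X_0 = \delta X_0$, the initial condition holds, and smoothness in $t$ passes from $\delta\tilde X$ to $\delta X$ through the continuity of $\Pi_g$. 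Differentiating, $\dot{\delta X}(t) = \Pi_g\,\mc S^*\delta(\mc L^*)\svp N|_{\delta\tilde X(t)}$. Writing $\delta\tilde X = \delta X - \mc S^*\mc L^*\svp\eta$ for the gauge parameter $\svp\eta$ that realises the $\Pi_g$-projection, and using that $\mc S^*\delta(\mc L^*)\svp N$ maps pure-gauge data into $\ms W_g$ (pure gauge solutions of linearised Einstein remain pure gauge), the $\svp\eta$-contribution is annihilated by $\Pi_g$. We conclude $\dot{\delta X} = \Pi_g\,\mc S^*\delta(\mc L^*)\svp N|_{\delta X}$, which matches \autoref{dotdX}, since the elliptic system \autoref{eq:n} together with the boundary conditions \autoref{eq:bound-conds-LL*} is precisely the realisation of $\Pi_g$ by a gauge correction $\mc S^*\mc L^*\svp n$.

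For uniqueness, the difference $\delta X_3(t) \in \ms V^\infty$ of two such solutions has $\delta X_3(0) = 0$ and represents a linearised Einstein solution; by the standard uniqueness theorem for linearised Einstein it must be pure gauge, $\delta X_3(t) = \mc S^*\mc L^*\svp\eta(t)$, and the gauge parameter $\svp\eta(t)$ inherits from $\delta X_3 \in \ms V_c^\infty$ the boundary conditions placing it in the class defining $\ms W_g$, so $\delta X_3(t) \in \ms W_g$; since also $\delta X_3(t) \in \ms V_g^\infty \subset \ms W_g^\perp$, we get $\delta X_3 \equiv 0$. The main technical obstacle is establishing well-posedness of the auxiliary ADM evolution uniformly in the weighted Sobolev norms \autoref{eq:Sobolev-norm}, including preservation of the conditions at $B$ and the prescribed asymptotic fall-off; together with the elliptic estimates of Appendix \ref{sec:proj-ops} this yields smoothness of $\delta X(t)$ in the topology of $\ms V^\infty$.
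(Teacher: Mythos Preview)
Your overall strategy---evolve by some auxiliary linearised Einstein system, then apply $\Pi_g$ slice by slice---is exactly the paper's strategy, and your uniqueness argument is essentially the same as the paper's. The gap is in your choice of auxiliary evolution.

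You propose to run the linearised ADM system \autoref{eq:evol-q}--\autoref{eq:evol-p} with $\svp n = 0$ and invoke ``standard well-posedness theory'' for a ``linear hyperbolic initial value problem.'' But the free-evolution ADM equations with fixed (here, vanishing perturbed) lapse and shift are famously only \emph{weakly} hyperbolic: the principal symbol has a nontrivial Jordan block in the gauge sector, so the usual energy estimates fail and there is no standard well-posedness theorem to cite. You flag this yourself at the end as ``the main technical obstacle,'' but it is not a detail---without a genuinely hyperbolic auxiliary system you cannot get the smooth dependence on $t$ in the $W^k_\rho$ norms that the proposition demands. The paper resolves this by evolving the spacetime metric perturbation in \emph{linearised harmonic gauge}, which reduces the equations to a system of linear wave equations; the standard energy estimates for wave equations (the paper cites Lemma~7.4.6 of Hawking--Ellis) then give existence and smoothness of $\widetilde{\delta X}(t)$ in $\ms P_{\rm gr}^\infty$, after which one applies $\Pi_g$ exactly as you do.

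A smaller point: your verification that $\dot{\delta X} = \Pi_g\,\mc S^*\delta(\mc L^*)\svp N\big|_{\delta X}$ requires that the time evolution of the pure-gauge piece $\mc S^*\mc L^*\svp\eta$ lands in $\ms W_g$, i.e.\ that $\svp\eta$ continues to obey the boundary conditions at $B$ and infinity defining $\ms W_g$. This is plausible but not automatic from ``pure gauge stays pure gauge,'' and would need a short check. The paper sidesteps this by not attempting to verify \autoref{dotdX} directly: once one has a smooth curve $t\mapsto\delta X(t)\in\ms V^\infty$ that is, at each $t$, a gauge representative of a linearised solution, \autoref{dotdX} follows because $\mc E$ was \emph{defined} as the unique gauge-fixed time derivative.
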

\begin{proof}
Uniqueness is obvious because the difference between two such solutions yields a solution to the linearized Einstein equation with vanishing initial data. Such a solution must be pure gauge, but the gauge has been completely fixed in $\ms V^\infty$, so the difference between two such solutions must vanish. To show existence, we solve the linearized Einstein equation in linearized harmonic gauge to obtain a metric perturbation $\delta g_{\mu \nu}$ with initial data $\delta X_0$. This solution will induce initial data $\widetilde{\delta X}(t)$ on each slice $\Sigma_t$. On account of the wave equation character of the linearized Einstein equation in this gauge, standard energy estimates (see, e.g., Lemma 7.4.6 of \cite{Hawking-Ellis}) can be used to show that $\widetilde{\delta X}(t)$ will lie in $\ms P_{\rm gr}^\infty$ and will vary smoothly with $t$. However, $\widetilde{\delta X}(t)$ must also satisfy the linearized constraints \autoref{linconst} and the conditions \autoref{admcond} because $\delta X_0$ satisfies these conditions and they are automatically preserved under time evolution. In addition ``time evolution'' (i.e., the evaluation of the solution on $\Sigma_t$ using the identification of $\Sigma_t$ with $\Sigma$ given by orbits of $t^\mu$) maps $B$ into $B$ (see \autoref{fig:spacetime}) and merely rotates it via the action of axial Killing fields. Since the perturbation is axisymmetric, it follows that \autoref{horcond} holds at all times if it holds initially. Thus, $\widetilde{\delta X}(t) \in \ms V_c$ for all $t$. The desired solution then is $\delta X(t) = \Pi_g \widetilde{\delta X}(t)$. Smoothness of $\delta X (t)$ in $t$ follows from the smoothness of 
$\widetilde{\delta X}(t)$ in $t$ together with the continuity of the gauge projection map $\Pi_g$ proven in Appendix \ref{sec:proj-ops}.
\end{proof}

The canonical energy, $\ms E$, is related to the evolution operator $\mc E$ by
\be
\ms E (\widetilde{\delta X}, \delta X) = \Omega_\Sigma \lb(\widetilde{\delta X},\mc E(\delta X) \rb) =\inp{\widetilde{\delta X}, \mc S\mc E (\delta X)}
\label{canen2}
\ee
for all $\widetilde{\delta X}, \delta X \in \ms V^\infty$. This equation follows immediately from the definition of $\ms E$, \autoref{canendef}, together with \autoref{dotdX} and \autoref{eq:inner-prod-defn}-\autoref{eq:symplectic-op-defn}. The relation \autoref{canen2} between $\ms E$ and the time evolution operator $\mc E$ expresses the fact that $\ms E$ is a Hamiltonian for the linearized equations. 

We now make use of the reflection isometry $i$ of the background spacetime. As in the previous section, we decompose initial data $(p_{ab}, q_{ab}) \in \ms V^\infty$ into its odd part, $P \in \ms V^\infty_{\rm odd}$, and even part, $Q \in \ms V^\infty_{\rm even} $ under the action of $i$ (see \autoref{eq:tphi-decomp}). Since the time evolution operator $\mc E$ is invariant under $i$, the gauge-fixed ADM evolution equations \autoref{eq:evol-q}-\autoref{eq:evol-p} take the form:
\begin{subequations}\label{eq:evol-odd-even}\begin{align}
	\dot Q = \mc K P  \label{eq:evol-K} \\
	\dot P = -\mc U Q \label{eq:evol-U}
\end{align}\end{subequations}
Here, the maps $\mc K$ and $\mc U$ act as
\be\begin{split}
	\mc K & : \ms V^\infty_{\rm odd} \to \ms V^\infty_{\rm even} \\
	\mc U & : \ms V^\infty_{\rm even} \to \ms V^\infty_{\rm odd} \, . \\
\end{split}\ee
We refer to $\mc K$ and $\mc U$, respectively, as the \emph{kinetic and potential time evolution operators}.  As in the case of $\mc E$, the operators \(\mc K\) and \(\mc U\) are not local differential operators since the gauge fixing procedure involves solving an elliptic equation. 

We now return to \autoref{canen2} and decompose the perturbations $\widetilde{\delta X}$ and $\delta X$ into their odd and even parts under $i$. Using the fact that no ``cross-terms'' can arise on account of the fact that $i$ is an isometry of the background spacetime, we see that the kinetic, potential and canonical energy quadratic forms, $\ms K$ and $\ms U$, are given in terms of the kinetic and potential time evolution operators, $\mc K$ and $\mc U$, by
\be\begin{split}
	\ms K(\tilde P, P) &= \inp{\tilde P, \mc S\mc K P} \\
	\ms U(\tilde Q, Q) &= -\inp{\tilde Q, \mc S \mc U Q} \, .\\
\label{KandU}
\end{split}\ee
Note that it follows immediately from the symmetry of the quadratic forms $\ms K$ and $\ms U$ that both \(\mc S\mc K: \ms V^\infty_{\rm odd} \to \ms V^\infty_{\rm odd}\) and \(\mc S\mc U: \ms V^\infty_{\rm even} \to \ms V^\infty_{\rm even}\) are symmetric linear maps in the \( L^2 \)-inner product. 

Taking the time derivative of \autoref{eq:evol-K} and using \autoref{eq:evol-U}, we obtain
\be\label{ddotQ}
\ddot Q = -\mc A Q 
\ee
where
\be
\mc A \defn \mc K\mc U  : \ms V^\infty_{\rm even} \to \ms V^\infty_{\rm even}  \, .
\label{Adef}
\ee
We wish to use spectral methods to solve this equation, but in order to do so, we need to define an inner product that makes $\mc A$ a symmetric operator. This can be achieved as follows. Let $\mc K[\ms V_{\rm odd}^\infty] \subseteq \ms V^\infty_{\rm even}$ denote the range of the operator $\mc K$. By \autoref{thm:positive-KE}, \(\mc S\mc K\) is a positive-definite operator on \(\ms V^\infty_{\rm odd}\), so $\mc K$ has vanishing kernel. Thus for all \( Q \in \mc K[\ms V_{\rm odd}^\infty] \), there exists a unique \( P \in \ms V^\infty_{\rm odd} \) such that, \( Q = \mc K P \). Using this fact, define a new inner product, $\inp{ , }_{\ms H}$ on \( \mc K[\ms V_{\rm odd}^\infty] \) by:
\be\label{eq:H-inp-defn}
	\inp{\tilde Q , Q}_{\ms H} \defn \ms K(\tilde P, P) 
\ee
where $\tilde{P}$ and $P$ are such that \( \tilde Q = \mc K \tilde P \) and \( Q = \mc K P \). That this is indeed an inner product follows from the symmetry, bilinearity and positive-definiteness of \(\ms K\) (\autoref{thm:positive-KE}). Note that we can write this inner product in terms of the $L^2$ inner product
\autoref{eq:inner-prod-defn} and the operator $\mc K$ as
\be
\inp{\tilde Q , Q}_{\ms H} = \inp{\tilde P, \mc S\mc K P} = \inp{\tilde P, \mc SQ} = \inp{\mc S\tilde Q, P} 
\label{Hprod2}
\ee
Thus, formally, the new inner product \autoref{eq:H-inp-defn} corresponds to taking the matrix element of $(\mc S \mc K)^{-1}$ in the $L^2$ inner product \autoref{eq:inner-prod-defn}.

We now complete the space \(\mc K[\ms V_{\rm odd}^\infty] \) in the inner product $\inp{ , }_{\ms H}$ to obtain a Hilbert space \( \ms H \). Note that $\ms V_{\rm even}^\infty \not\subset \ms H$. However, $\ms H$ does contain all $Q \in \ms V_{\rm even}^\infty$ that are of the form $Q = \mc K P$ for $P \in \ms V_{\rm odd}^\infty$. In view of \autoref{eq:evol-odd-even}, this means that the even part of all perturbations that are of the form $\Lie_t\delta X'$ for some perturbation \(\delta X' \in \ms V^\infty\) will be represented in $\ms H$. As a consequence, our stability analysis of the next section will not apply to all perturbations in $\ms V^\infty$, but it will apply to all perturbations of the form $\Lie_t\delta X'$ with \(\delta X' \in \ms V^\infty\).

Finally, it will be convenient to complexify $\ms H$ in order to apply spectral methods. We continue denote the complexified Hilbert space as $\ms H$, i.e., we will not distinguish the complexification in our notation.


\section{Dynamics in $\ms H$; Negative Energy and Exponential Growth} \label{sec:exp}

The operator $\mc A$ of \autoref{Adef} is well defined as an operator $\mc A : \ms H \to \ms H$ with dense domain given by \( \dom\mc A = \mc K[\ms V_{\rm odd}^\infty] \). For $\tilde{Q}, Q \in \dom\mc A $, we have
\be
\inp{\tilde Q , \mc AQ}_{\ms H} =  \inp{\tilde Q ,  \mc K (\mc U Q)}_{\ms H} = \inp{\mc S\tilde Q, \mc U Q}  = -\inp{\tilde Q, \mc S\mc U Q} = \ms U(\tilde Q , Q)
\label{HU}
\ee
where \autoref{KandU} and the last equality of \autoref{Hprod2} were used. Thus \( \mc A \) is densely defined symmetric operator on \(\ms H\). Indeed, the entire purpose of introducing the inner product \autoref{eq:H-inp-defn} was to make $\mc A$ symmetric.

If \(\ms H\) were finite-dimensional, then the operator \(\mc A\) would be self-adjoint on \(\ms H\) and would admit an orthonormal basis of eigenvectors. This basis would diagonalize \autoref{ddotQ}, and we could immediately conclude that any eigenvector of \(\mc A\)  with a negative eigenvalue would correspond to an exponentially growing solution. However, \(\ms H\) is infinite-dimensional and \(\mc A\) is known only to be a symmetric operator defined on a dense domain in \(\ms H\), so the argument for exponential growth cannot be made so straightforwardly.

Clearly $\mc A : \ms H \to \ms H$ is a real operator, i.e., it commutes with the (anti-linear) complex conjugation operator on $\ms H$ that we have just introduced via our complexification at the end of the previous section.
We believe it likely that $\mc A$ satisfies the properties that it is bounded below and that it is essentially self-adjoint on the domain $\mc K[\ms V_{\rm odd}^\infty]$. If these properties hold, then $\mc A$ would have a unique self-adjoint extension $\tilde{\mc A}$ that also would be bounded below. Unfortunately, it does not appear to be straightforward to establish either of these properties. Fortunately, as we shall see, we can bypass the issue of essential self-adjointness of $\mc A$ by simply choosing an arbitrary self-adjoint extension $\tilde{\mc A}$; existence of such an extension is guaranteed by virtue of $\mc A$ being real. As we also shall see, we are able to bypass the issue of $\tilde{\mc A}$ being bounded below by using spectral projections.

Consider initial data of the form $(P_0=0, Q_0= \mc K P'_0)$ where $P'_0 \in \ms V_{\rm odd}^\infty$. Clearly, we have $Q_0 \in {\mc K}[\ms V^\infty_{\rm odd}]$, so $Q_0 \in \ms H$. Let $\delta X(t)$ denote the unique solution in $\ms V^\infty$ to \autoref{dotdX} with this initial data (see \autoref{prop:evol}). Let $Q_t \in \ms V_{\rm even}^\infty$ denote the ($t$-$\phi$)-even part of $\delta X(t)$. We claim first that $Q_t \in \ms H$ for all $t$. To show this, let $\delta X'$ be the unique solution in $\ms V^\infty$ with initial data $(P'_0, Q'_0 = 0)$. Then $\Lie_t \delta X'$ has initial data $(P_0=0, Q_0= \mc K P'_0)$, and hence $\delta X (t) = \Lie_t \delta X' (t)$ for all $t$. It follows that $Q_t$ takes the form $Q_t = \mc K P'_t$ where $P'_t \in \ms V_{\rm odd}^\infty$ is the ($t$-$\phi$)-odd part of $\delta X'(t)$. Thus, $Q_t \in \ms H$ for all $t$. Since $\dom \mc A = \ms V_{\rm even}^\infty \cap \ms H$, we also have $Q_t \in \dom \mc A$ for all $t$. Furthermore, when viewed as a one-parameter family of vectors in $\ms H$, $Q_t$ satisfies
\be
\frac{d^2Q_t}{dt^2} = - \mc A Q_t \, 
\label{QevolH}
\ee
(see \autoref{ddotQ}.)

Let \( \tilde{\mc A} \) be any self-adjoint extension of \( \mc A \) on \( \ms H \). Let \(\{E_\lambda\}\) denote the family of projection operators associated to $\tilde{\mc A}$ such that\footnote{The meaning of the integral (as a Lebesgue-Stieltjes integral) is explained, e.g., in Sec.120 of \cite{RN-book} and in \cite{Reed-Simon}. }
\be
	\tilde{\mc A} = \int\limits_{-\infty}^\infty\lambda dE_\lambda
	\label{specrep}
\ee
so that \(E_\lambda\) projects onto the spectral subspace of $\tilde{\mc A}$ in \((-\infty, \lambda]\). Define $Q_{t,\beta}$ by
\be
Q_{t,\beta} = (I - E_\beta) Q_t
\label{Qlambda}
\ee
so that $Q_{t,\beta}$ is the projection of $Q_t$ onto the spectral subspace $(\beta, \infty)$ of $\tilde{\mc A}$. Clearly, we have $Q_t = \lim\limits_{\beta \to -\infty} Q_{t,\beta}$ and $||Q_{t,\beta}||_{\ms H} \leq ||Q_t||_{\ms H}$ for all $\lambda$. Applying $(I - E_\beta)$ to \autoref{QevolH}, we obtain
\be
\frac{d^2Q_{t,\beta}}{dt^2} = - \tilde{\mc A} Q_{t,\beta} \, .
\label{QevolH2}
\ee

Our next result will require the following lemma:

\begin{lemma}\label{lem:unique-soln}
Let \(\tilde{\mc A} \) be an arbitrary self-adjoint operator on a Hilbert space $\ms H$. Let $R_t$ be a one parameter family of vectors in $\ms H$ such that (i) $R_t$ is twice differentiable in $t$, (ii) $R_0 = (d R_t/dt)|_0 = 0$ (iii) $R_t$ lies in $\dom \tilde{\mc A}$ for all $t$, and (iv) we have
\be
\frac{d^2R_t}{dt^2} = - \tilde{\mc A} R_t 
\ee
for all $t$. Then $R_t = 0$.
\end{lemma}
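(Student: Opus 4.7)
The plan is to reduce the lemma to one involving a \emph{bounded} self-adjoint operator by projecting onto the bounded spectral subspaces of $\tilde{\mc A}$, where ODE uniqueness is standard, and then to take a limit. This avoids having to work with an energy functional built directly from $\tilde{\mc A}$, which would be awkward since $\tilde{\mc A}$ is not assumed to be bounded below and so $\inp{R_t,\tilde{\mc A}R_t}$ need not be sign-definite.

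For each $\beta > 0$, I would set $F_\beta \defn E_\beta - E_{-\beta}$, the spectral projection of $\tilde{\mc A}$ onto $(-\beta,\beta]$, and define $S^\beta_t \defn F_\beta R_t$. Since $F_\beta$ is a bounded linear operator, it commutes with differentiation in $t$, so $S^\beta_t$ is twice differentiable with $\tfrac{d^2 S^\beta_t}{dt^2} = F_\beta\tfrac{d^2R_t}{dt^2}$. The Borel functional calculus for the self-adjoint operator $\tilde{\mc A}$ implies that $F_\beta$ preserves $\dom\tilde{\mc A}$, that $\tilde{\mc A}F_\beta = F_\beta\tilde{\mc A}$ on this domain, and that $\tilde{\mc A}$ restricted to the range of $F_\beta$ is bounded with norm at most $\beta$. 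Combining these observations with hypothesis (iv), $S^\beta_t$ satisfies
\begin{equation}
\tfrac{d^2 S^\beta_t}{dt^2} = -\tilde{\mc A}\, S^\beta_t,\qquad S^\beta_0 = 0,\qquad \tfrac{d S^\beta_t}{dt}\Big|_0 = 0,
\end{equation}
where the right-hand side now involves only the bounded restriction of $\tilde{\mc A}$.

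On this bounded-spectrum subspace, uniqueness follows from a direct Gr\"onwall argument. Differentiating $\norm{dS^\beta_t/dt}^2 + \norm{S^\beta_t}^2$ in $t$, using the equation of motion, Cauchy--Schwarz, and the bound $\norm{\tilde{\mc A} S^\beta_t} \leq \beta \norm{S^\beta_t}$, one obtains
\begin{equation}
\tfrac{d}{dt}\lb(\norm{dS^\beta_t/dt}^2 + \norm{S^\beta_t}^2\rb) \leq (\beta+1)\lb(\norm{dS^\beta_t/dt}^2 + \norm{S^\beta_t}^2\rb).
\end{equation}
Since this quantity vanishes at $t=0$ by the initial conditions, Gr\"onwall forces it to vanish identically, so $S^\beta_t = F_\beta R_t = 0$ for all $t$ and all $\beta > 0$.

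Finally, passing to the limit $\beta \to \infty$, the spectral theorem gives $F_\beta \to I$ strongly on $\ms H$, hence $R_t = \lim_{\beta \to \infty} F_\beta R_t = 0$, as desired. The only real subtlety is the preliminary verification that $F_\beta$ commutes both with differentiation in $t$ and with $\tilde{\mc A}$ on its domain -- I expect this to be the most technical, though routine, part of the argument. Both facts are standard consequences of the boundedness of $F_\beta$ together with the Borel functional calculus, and crucially no assumption of semi-boundedness on $\tilde{\mc A}$ enters anywhere.
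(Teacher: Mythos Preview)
Your proposal is correct and is essentially identical to the paper's own proof: the paper likewise projects $R_t$ onto the spectral interval $[-\Lambda,\Lambda]$, applies a Gr\"onwall estimate to the quantity $\norm{R_{t,\Lambda}}^2 + \norm{dR_{t,\Lambda}/dt}^2$ using $\norm{\tilde{\mc A} R_{t,\Lambda}} \leq \Lambda \norm{R_{t,\Lambda}}$, and then lets $\Lambda \to \infty$. The only cosmetic difference is the constant in the differential inequality (the paper obtains $(\Lambda^2+2)$ rather than your $(\beta+1)$), which is immaterial.
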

\begin{proof}
Let $\Lambda > 0$ and let $R_{t,\Lambda}$ be the projection of $R_t$ onto the spectral interval \([-\Lambda, \Lambda]\), i.e., 
\be
R_{t,\Lambda} = \lb(E_\Lambda - E_{-\Lambda}\rb) R_t
\label{sproj}
\ee
where $E_\lambda$ is the spectral family of projection operators associated with $\tilde{\mc A}$. Then $R_{t,\Lambda}$ satisfies all of the properties of $R_t$ stated in the hypothesis of this Lemma. Let
\be
f_\Lambda (t) = \norm{R_{t, \Lambda}}^2 + \norm{dR_{t, \Lambda}/dt}^2 \, .
\ee
Then we have:
\begin{eqnarray}
	\frac{df_\Lambda}{dt} & = 2 \Re \inp{dR_{t, \Lambda}/dt, R_{t, \Lambda}} + 2 \Re \inp{d^2R_{t, \Lambda}/dt^2, dR_{t, \Lambda}/dt} \\
&= 2 \Re \inp{dR_{t, \Lambda}/dt, R_{t, \Lambda}} + 2 \Re \inp{- \tilde{\mc A} R_{t, \Lambda}, dR_{t, \Lambda}/dt} 
\end{eqnarray}
where $\Re$ denotes the real part. Thus, we obtain
\be
\left| \frac{df_\Lambda}{dt} \right| \leq \norm{R_{t, \Lambda}}^2 + 2 \norm{dR_{t, \Lambda}/dt}^2 + \norm{\tilde{\mc A} R_{t, \Lambda}}^2  
\ee
However, $\norm{\tilde{\mc A} R_{t, \Lambda}}^2 \leq \Lambda^2 \norm{R_{t, \Lambda}}^2$ on account of the spectral projection \autoref{sproj}. Thus, we obtain
\be
\left| \frac{df_\Lambda}{dt} \right| \leq (\Lambda^2 + 2) f_\Lambda
\ee
Since \(f_\Lambda \geq 0 \) for all \(t \in \bb R\) and \(f_\Lambda\vert_{t=0} = 0\), this inequality implies that \(f_\Lambda(t) = 0 \) for all $t$, which implies that $R_{t,\Lambda} = 0$. However, the vanishing of $R_{t,\Lambda}$ for all $\Lambda$ implies that $R_t = 0$.
\end{proof}

Our instability results will be based upon the following proposition

\begin{prop}\label{prop:Qtformula} $Q_{t,\beta}$ as defined by \autoref{Qlambda} is given by
\be\label{eq:spectral-soln}
	Q_{t,\beta}  = \cos(t\tilde{\mc A}_+^{1/2})Q_{0, \beta}
				 + \Pi_0Q_{0,\beta}
			 + \cosh(t\tilde{\mc A}_-^{1/2}) Q_{0,\beta}
\ee
where $\tilde{\mc A}_+ = (I - E_0) \tilde{\mc A}$ is the positive part of $\tilde{\mc A}$, $\Pi_0$ is the projection onto the zero eigenvalue subspace\footnote{$\Pi_0$ will be nontrivial if and only if there exist stationary perturbations for which $\delta M = \delta J = 0$.} of $\tilde{\mc A}$, and $\tilde{\mc A}_- = -(E_0 - \Pi_0) \tilde{\mc A}$ is minus the negative part of $\tilde{\mc A}$ (so $\tilde{\mc A}_-$ is a positive operator).
\end{prop}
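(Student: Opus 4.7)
The plan is to construct a candidate solution $\tilde Q_{t,\beta}$ directly via the functional calculus of $\tilde{\mc A}$ using the right-hand side of \autoref{eq:spectral-soln}, to verify that it satisfies the same initial value problem as $Q_{t,\beta}$, and then to invoke \autoref{lem:unique-soln} to conclude the two agree. I would read the three terms in \autoref{eq:spectral-soln} as supported on the mutually orthogonal spectral subspaces $(I-E_0)\ms H$, $\Pi_0\ms H$, and $(E_0-\Pi_0)\ms H$, so that
\[
\tilde Q_{t,\beta} \defn \int_{(0,\infty)}\cos(t\sqrt{\lambda})\, dE_\lambda\, Q_{0,\beta} \;+\; \Pi_0 Q_{0,\beta} \;+\; \int_{(-\infty,0)}\cosh(t\sqrt{-\lambda})\, dE_\lambda\, Q_{0,\beta},
\]
and at $t=0$ the three pieces reassemble into $Q_{0,\beta}$.

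First I would check that the candidate lives in the right space. Since $Q_0 = \mc K P'_0 \in \mc K[\ms V^\infty_{\rm odd}] = \dom \mc A \subset \dom \tilde{\mc A}$, and since spectral projections of $\tilde{\mc A}$ preserve $\dom\tilde{\mc A}$, we have $Q_{0,\beta} = (I-E_\beta)Q_0 \in \dom\tilde{\mc A}$. Because $Q_{0,\beta}$ carries no spectral mass in $(-\infty,\beta]$, the $\cosh$ integral actually runs over $(\beta,0)$ only, where the integrand is bounded by $\cosh(|t|\sqrt{\max(0,-\beta)})$; hence $\tilde Q_{t,\beta}\in\ms H$ for every $t$. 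The membership $Q_{0,\beta}\in \dom\tilde{\mc A}$ gives $\int \lambda^2\, d\norm{E_\lambda Q_{0,\beta}}^2 < \infty$, which together with dominated convergence allows me to differentiate twice under each spectral integral. The first $t$-derivatives of $\cos(t\sqrt\lambda)$ and $\cosh(t\sqrt{-\lambda})$ vanish at $t=0$, yielding $\tilde Q_{0,\beta} = Q_{0,\beta}$ and $\dot{\tilde Q}_{0,\beta} = 0$; the second derivatives produce exactly $-\lambda$ times the integrand in each spectral region, which identifies $\ddot{\tilde Q}_{t,\beta} = -\tilde{\mc A}\tilde Q_{t,\beta}$ and confirms $\tilde Q_{t,\beta}\in\dom\tilde{\mc A}$ for all $t$.

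Next I would check that the same initial value problem is solved by $Q_{t,\beta}$ itself. By \autoref{prop:evol}, $Q_t\in\ms V^\infty_{\rm even}\cap\ms H \subseteq \dom \mc A \subseteq \dom\tilde{\mc A}$, so \autoref{QevolH} reads $\ddot Q_t = -\tilde{\mc A}Q_t$. Applying the bounded spectral projection $(I-E_\beta)$, which commutes with $\tilde{\mc A}$ on its domain, I obtain $Q_{t,\beta}\in\dom\tilde{\mc A}$ and $\ddot Q_{t,\beta} = -\tilde{\mc A}Q_{t,\beta}$. The initial conditions $P_0=0$ and \autoref{eq:evol-K} force $\dot Q_0 = \mc K P_0 = 0$, hence $\dot Q_{0,\beta} = 0$. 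The difference $R_t \defn Q_{t,\beta} - \tilde Q_{t,\beta}$ therefore satisfies all four hypotheses of \autoref{lem:unique-soln}, and that lemma forces $R_t \equiv 0$, giving \autoref{eq:spectral-soln}.

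The main obstacle I anticipate is the rigorous justification of differentiating under the spectral integrals and of showing that $\tilde Q_{t,\beta}$ lies in $\dom\tilde{\mc A}$ for every $t$; both hinge on the identity $\dom\mc A = \mc K[\ms V^\infty_{\rm odd}]$, which places the given initial datum $Q_0$ in $\dom\tilde{\mc A}$ and supplies the finite second spectral moment needed to dominate the integrands. The spectral cutoff $(I-E_\beta)$ plays a convenient secondary role of bounding the negative spectrum of $\tilde{\mc A}$ that enters the $\cosh$ term by $|\beta|$, so the formula is meaningful and norm-continuous in $t$ without any implicit assumption that $\tilde{\mc A}$ is bounded below.
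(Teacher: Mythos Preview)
Your proof is correct and follows essentially the same route as the paper: define the spectral candidate, verify it solves the second-order equation in $\ms H$ with the correct initial data (using the spectral cutoff at $\beta$ to make the $\cosh$ term well defined), and then invoke \autoref{lem:unique-soln} on the difference. Your treatment is in fact slightly more explicit than the paper's about the dominated-convergence step for differentiating under the spectral integral and about why $\dot Q_{0,\beta}=0$.
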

\begin{proof}
Let
\be
S_{t,\beta}  = \cos(t\tilde{\mc A}_+^{1/2})Q_{0, \beta}
				 + \Pi_0Q_{0,\beta}
			 + \cosh(t\tilde{\mc A}_-^{1/2}) Q_{0,\beta}
\ee
Then $S_{t,\beta}$ is well defined since $\cos(t\tilde{\mc A}_+^{1/2})$ and $\Pi_0$ are bounded operators, and $Q_{0,\beta} \in \dom \cosh(t\tilde{\mc A}_-^{1/2})$ on account of the spectral cutoff \autoref{Qlambda} used in defining $Q_{t,\beta}$. Since $Q_{0, \beta} \in \dom  \tilde{\mc A}$, it follows that $S_{t,\beta} \in \dom \tilde{\mc A}$ for all $t$. Using the facts that $\dom \tilde{\mc A} \subset \dom \tilde{\mc A}_+^{1/2}$ and $\dom \tilde{\mc A} \subset \dom \tilde{\mc A}_-^{1/2}$, it follows that that $S_{t,\beta}$ is twice differentiable in $t$ and
\be
\frac{d^2S_{t,\beta}}{dt^2} = - \tilde{\mc A} S_{t,\beta} \, .
\ee
Furthermore, we have $S_{0,\beta} = Q_{0,\beta}$ and $(d S_{t,\beta}/dt)|_0 = 0$. The proposition now follows immediately by applying \autoref{lem:unique-soln} to $R_t = Q_{t,\beta} - S_{t,\beta}$.

\end{proof}

We now state and prove our main instability result:

\begin{prop} \label{prop:blowup} Let $Q_0 = \mc K P'_0$ with $P'_0 \in \ms V_{\rm odd}^\infty$ be such that $\ms U(Q_0, Q_0) < 0$. Then the solution generated by the initial data $(P_0 = 0, Q_0)$ grows exponentially with time in the sense that there exists $C > 0$ and $\alpha > 0$ such that 
\be
\norm{Q_t}_{\ms H} > C \exp(\alpha t)
\ee
\end{prop}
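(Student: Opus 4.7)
The plan is to use Proposition \ref{prop:Qtformula} (the spectral formula for $Q_{t,\beta}$) together with the spectral theorem to extract exponential growth from the negative spectral subspace of $\tilde{\mc A}$ that the hypothesis forces $Q_0$ to populate.

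First, I would translate the hypothesis into spectral language. Let $\mu(\lambda) \defn \norm{E_\lambda Q_0}_{\ms H}^2$ denote the spectral measure of $Q_0$ with respect to $\tilde{\mc A}$. Using \autoref{HU}, the hypothesis reads
\be
	\int_{-\infty}^{\infty} \lambda\, d\mu(\lambda) = \inp{Q_0, \tilde{\mc A} Q_0}_{\ms H} = \ms U(Q_0, Q_0) < 0 \, .
\ee
Since $\mu$ is a finite positive measure of total mass $\norm{Q_0}_{\ms H}^2$, I claim one can choose $\alpha > 0$ and $\beta < -\alpha^2$ so that $m \defn \mu((\beta, -\alpha^2]) > 0$. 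Indeed, if $\mu$ placed no mass on $(-\infty, -\alpha^2]$ for all $\alpha > 0$ then $d\mu$ would be supported in $[0, \infty)$ and the first moment would be nonnegative; once a suitable $\alpha$ is fixed so that $\mu((-\infty, -\alpha^2]) > 0$, finiteness of $\mu$ lets me pick $\beta$ with $|\beta|$ large enough that the truncated interval $(\beta, -\alpha^2]$ still carries positive mass.

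Next, I would apply Proposition \ref{prop:Qtformula} with this choice of $\beta$. The three terms in \autoref{eq:spectral-soln} live in mutually orthogonal spectral subspaces of $\tilde{\mc A}$ (the positive part, the zero eigenspace, and the negative part on $(\beta, 0)$), so by Pythagoras
\be
	\norm{Q_{t,\beta}}_{\ms H}^2 \geq \norm{\cosh(t\tilde{\mc A}_-^{1/2}) Q_{0,\beta}}_{\ms H}^2 = \int_{(\beta,\, 0)} \cosh^2(t\sqrt{-\lambda})\, d\mu(\lambda) \geq m\, \cosh^2(\alpha t) \, ,
\ee
where the last step uses $\sqrt{-\lambda} \geq \alpha$ on $(\beta, -\alpha^2]$ together with monotonicity of $\cosh$ on $[0,\infty)$. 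Finally, since $I - E_\beta$ is an orthogonal (spectral) projection, $\norm{Q_{t,\beta}}_{\ms H} \leq \norm{Q_t}_{\ms H}$; combining with $\cosh(\alpha t) \geq \tfrac{1}{2}\exp(\alpha t)$ for $t \geq 0$ yields
\be
	\norm{Q_t}_{\ms H} \geq \tfrac{\sqrt{m}}{2}\exp(\alpha t) \qquad (t \geq 0) \, ,
\ee
which is the desired conclusion with $C \defn \sqrt{m}/2$.

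The main obstacle is the first step: extracting a positive lower bound $m$ on a compact negative spectral interval $(\beta, -\alpha^2]$ that is strictly bounded away from $0$. A priori, the negative part of $\mu$ could accumulate at $\lambda = 0$, in which case although $\ms U(Q_0, Q_0)$ is negative there would be no ``gap'' from which to extract a definite exponential rate $\alpha$. This is ruled out by the quantitative bound $\int_{(-\alpha^2,\, 0)} \lambda\, d\mu \geq -\alpha^2 \norm{Q_0}_{\ms H}^2$, which forces $\mu$ to charge $(-\infty, -\alpha^2]$ whenever $\alpha^2 < -\ms U(Q_0, Q_0)/\norm{Q_0}_{\ms H}^2$; in fact, one can read off an explicit lower bound on the growth rate $\alpha$ in terms of $\ms U(Q_0, Q_0)/\norm{Q_0}_{\ms H}^2$ in this way, which is the seed of the Rayleigh--Ritz variational principle mentioned in the introduction. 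Everything else is routine spectral calculus.
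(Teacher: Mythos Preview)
Your proof is correct and follows essentially the same approach as the paper: both translate the hypothesis $\ms U(Q_0,Q_0)<0$ into $\inp{Q_0,\mc A Q_0}_{\ms H}<0$, conclude that the spectral measure of $Q_0$ charges some interval $(\beta,-\alpha^2]$ with $\alpha>0$ and $\beta$ finite, invoke \autoref{prop:Qtformula} for that $\beta$, and use the $\cosh$ term to extract exponential growth via $\norm{Q_t}_{\ms H}\geq\norm{Q_{t,\beta}}_{\ms H}$. The only cosmetic difference is that the paper first applies the projection $E_{-\alpha^2}$ and then bounds, while you bound the $(\beta,0)$ integral from below by its restriction to $(\beta,-\alpha^2]$; your explicit measure-theoretic justification for the existence of $\alpha$ and $\beta$, and your closing remark tying the rate to $-\ms U(Q_0,Q_0)/\norm{Q_0}^2_{\ms H}$, actually make the connection to the subsequent Rayleigh--Ritz theorem more transparent than the paper's version.
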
 
\begin{proof}
By \autoref{HU} we have 
\be
\inp{Q_0 , \mc A Q_0}_{\ms H} = \ms U(Q_0 , Q_0) < 0 \, .
\ee
Therefore, there exists $\alpha > 0$ such that $E_{-\alpha^2} Q_0 \neq 0$. Clearly, we can choose $\beta$ sufficiently large and negative so that $E_{-\alpha^2} (I - E_\beta) Q_0 = E_{-\alpha^2} Q_{0, \beta} \neq 0$. However, by \autoref{prop:Qtformula} and the spectral representation \autoref{specrep}, we have
\begin{eqnarray}
\norm{E_{-\alpha^2} Q_{t, \beta}}^2_{\ms H} &=& \int^\beta_{-\alpha^2} \cosh^2 (t \lambda^{1/2}) \, d \norm{E_\lambda Q_0}^2_{\ms H} \nonumber \\
&\geq& \cosh^2 (t \alpha) \norm{E_{-\alpha^2} Q_{0,\beta}}^2_{\ms H} > \exp(2\alpha t) \norm{E_{-\alpha^2} Q_{0,\beta}}^2_{\ms H}
\end{eqnarray}
Thus, we obtain
\be
\norm{Q_t}_{\ms H} \geq \norm{Q_{t, \beta}}_{\ms H} \geq \norm{E_{-\alpha^2} Q_{t, \beta}}_{\ms H} >  \norm{E_{-\alpha^2} Q_{0, \beta}}_{\ms H} \exp(\alpha t)
\ee
as we desired to show.
\end{proof}

\noindent
{\bf Remark}: By similar arguments, it follows that $\norm{\dot{Q}_t}_{\ms H}$ also grows exponentially with $t$. However, we have
\be
\norm{\dot{Q}_t}^2_{\ms H} = \norm{\mc K P_t}^2_{\ms H} = \ms K(P_t, P_t)
\ee
where $P_t$ is the ($t$-$\phi$)-odd part at time $t$ of the solution generated by the initial data $(0, Q_0)$. Thus, the kinetic energy, $\ms K$, of the solution generated by $(0, Q_0)$ blows up exponentially with time. Since the total canonical energy $\ms E$ is conserved, it follows that $-\ms U(Q_t, Q_t)$ also blows up exponentially. Since $\ms K$ and $\ms U$ are gauge invariant, this shows that the exponential blow-up found in \autoref{prop:blowup} is not a gauge artifact.

\medskip

We now reformulate our results as a Rayleigh-Ritz-type of variational principle for analyzing black hole instability, which makes no direct reference to the Hilbert space $\ms H$.

\begin{thm}[Variational Principle for Instability] \label{thm:exp-growth}
For any $P \in {\ms V}_{\rm odd}^\infty$ consider the quantity
\be \label{eq:rate-defn}
\omega^2 (P) \defn \frac{{\ms U}({\mc K} P, {\mc K} P)}{{\ms K} (P,P)} 
\ee
Then if $\omega^2 < 0$, the solution $\delta X(t)$ determined by the initial data $(P,0)$ will grow with time at least as fast as $\exp(\alpha t)$ for any $\alpha < |\omega|$, in the sense that the kinetic energy, $\ms K$, of $\Lie_t \delta X$ will satisfy 
\be
\lim\limits_{t \to \infty} \left[\ms K (\Lie_t \delta X, \Lie_t \delta X) \exp(-2\alpha t)\right] = \infty \, .
\ee
\end{thm}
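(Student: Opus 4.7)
The plan is to reduce \autoref{thm:exp-growth} to an application of \autoref{prop:blowup} with $\Lie_t \delta X$---rather than $\delta X$ itself---playing the role of the solution whose ($t$-$\phi$)-even part blows up. Let $\delta X(s) \in \ms V^\infty$ denote the unique solution to \autoref{dotdX} with $\delta X(0) = (P, 0)$, furnished by \autoref{prop:evol}, and set $\delta Y \defn \Lie_t \delta X = \dot{\delta X}$. The evolution equations \autoref{eq:evol-odd-even} at $s=0$ give
\be
\delta Y(0) = (\dot P(0), \dot Q(0)) = (-\mc U \cdot 0,\, \mc K P) = (0, Q_0), \qquad Q_0 \defn \mc K P \in \mc K[\ms V^\infty_{\rm odd}] \subset \ms H,
\ee
so $\delta Y$ is exactly a solution of the type to which \autoref{prop:blowup} applies.

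Second, I would identify the quantity to be estimated. Writing $\delta Y(s) = (\tilde P(s), \tilde Q(s))$, the relation $\dot{\tilde Q}(s) = \mc K \tilde P(s)$ from \autoref{eq:evol-K} together with the definition \autoref{eq:H-inp-defn} of the $\ms H$-norm gives
\be
\ms K(\Lie_t \delta X, \Lie_t \delta X)\bigl|_{s=t} = \ms K(\tilde P(t), \tilde P(t)) = \|\mc K \tilde P(t)\|_\ms H^2 = \|\dot{\tilde Q}_t\|_\ms H^2,
\ee
so the theorem reduces to showing $\|\dot{\tilde Q}_t\|_\ms H^2\, e^{-2\alpha t} \to \infty$ for every $\alpha < |\omega|$. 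By \autoref{HU} and \autoref{eq:H-inp-defn}, the hypothesis $\omega^2 < 0$ translates into
\be
\langle Q_0, \mc A Q_0 \rangle_\ms H = \ms U(\mc K P, \mc K P) = \omega^2\, \ms K(P, P) = \omega^2 \|Q_0\|_\ms H^2,
\ee
and a short argument using the spectral measure of $Q_0$ for $\tilde{\mc A}$ then implies $E_{-\alpha^2} Q_0 \neq 0$ for every $\alpha < |\omega|$: if not, the measure would be supported in $(-\alpha^2, \infty)$, forcing $\langle Q_0, \mc A Q_0 \rangle_\ms H > -\alpha^2 \|Q_0\|_\ms H^2 > \omega^2 \|Q_0\|_\ms H^2$ (since $\alpha < |\omega|$ gives $-\alpha^2 > \omega^2$), contradicting the previous identity.

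Finally I would extract the exponential lower bound on $\|\dot{\tilde Q}_t\|_\ms H$ by differentiating the spectral formula \autoref{eq:spectral-soln} of \autoref{prop:Qtformula}. Introducing the lower cutoff $Q_{0, \beta} = (I - E_\beta) Q_0$ with $\beta$ sufficiently negative keeps $Q_{0, \beta} \in \dom \tilde{\mc A}$ and renders $\tilde{\mc A}_-$ bounded on the truncated spectral subspace, so termwise differentiation of \autoref{eq:spectral-soln} is justified; applying $E_{-\alpha^2}$ then annihilates the bounded oscillatory $\cos$-term and the $\Pi_0$-term, leaving
\be
E_{-\alpha^2}\, \dot Q_{t, \beta} = \tilde{\mc A}_-^{1/2}\, \sinh\!\bigl(t\, \tilde{\mc A}_-^{1/2}\bigr) E_{-\alpha^2}\, Q_{0, \beta}.
\ee
Combining \autoref{specrep} with the pointwise bound $|\lambda|\sinh^2(t|\lambda|^{1/2}) \geq \tfrac{\alpha^2}{4} e^{2\alpha t}$ valid on the support $(\beta, -\alpha^2]$ (for $t$ large), and choosing $\beta$ so that $\|E_{-\alpha^2} Q_{0, \beta}\|_\ms H > 0$ (possible by the previous paragraph), yields $\|\dot{\tilde Q}_t\|_\ms H^2 \geq C\, e^{2\alpha t}$. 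To strengthen this lower bound to the divergence $\|\dot{\tilde Q}_t\|_\ms H^2\, e^{-2\alpha t} \to \infty$ demanded by the theorem, I would simply apply the estimate with some $\alpha' \in (\alpha, |\omega|)$ in place of $\alpha$, producing the surplus factor $e^{2(\alpha' - \alpha) t} \to \infty$. The one genuinely delicate point is the domain bookkeeping for the unbounded operator $\tilde{\mc A}_-^{1/2}$, which is handled via the cutoff $\beta$ exactly as in \autoref{prop:blowup}; otherwise the theorem is a quantitative refinement of \autoref{prop:blowup} paired with the Rayleigh-quotient reading of $\omega^2$.
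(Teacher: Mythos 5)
Your proposal is correct and follows essentially the same route as the paper: identify $\omega^2$ as the Rayleigh quotient $\inp{Q_0,\mc A Q_0}_{\ms H}/\inp{Q_0,Q_0}_{\ms H}$ with $Q_0 = \mc K P$, deduce $E_{-\alpha^2}Q_0 \neq 0$ for all $\alpha < |\omega|$, and apply the blow-up machinery of \autoref{prop:blowup} to the solution $\Lie_t\delta X$ with initial data $(0,Q_0)$, using $\ms K(\Lie_t\delta X,\Lie_t\delta X) = \norm{\dot{\tilde Q}_t}^2_{\ms H}$. The only difference is that you explicitly carry out the differentiation of the spectral formula and the $\alpha' \in (\alpha,|\omega|)$ upgrade to divergence, steps the paper delegates to the remark following \autoref{prop:blowup} with the phrase ``by similar arguments.''
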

\begin{proof}
Let $Q_0 = \mc K P$. Then we have
\be
{\ms U}({\mc K} P, {\mc K} P) = \ms U (Q_0, Q_0) = \inp{Q_0, \mc A Q_0}_{\ms H}
\ee
and
\be
\ms K (P,P) = \inp{Q_0, Q_0}_{\ms H} \, .
\ee
Thus, 
\be
	\omega^2 = \frac{ \inp{Q_0, \mc A Q_0}_{\ms H} }{ \inp{Q_0, Q_0}_{\ms H} } 
\ee
so if $\omega^2 < 0$, then it must be the case that $E_{-\alpha^2} Q_0 \neq 0$ for any $\alpha^2 < |\omega^2|$.

The solution $\Lie_t \delta X$ has initial data $(0,Q_0)$. The theorem follows immediately by applying \autoref{prop:blowup} and the remark below that proposition to this solution.

\end{proof}

\noindent
{\bf Remark}: The variational principle in \autoref{thm:exp-growth} may be viewed as a generalization of the variational principle of \cite{SW} for spherically symmetric perturbations of static, spherically symmetric spacetimes (with matter fields), which itself is a generalization of the variational principle of \cite{Chandra}. Indeed, much more generally, in an arbitrary theory, the key ingredients needed to obtain such a variational principle for linearized perturbations off of a stationary background are that (i) the theory be derived from a Lagrangian, so that the canonical energy, $\ms E$, of perturbations is well defined; (ii) gauge conditions can be chosen that uniquely fix the gauge in such a way that $\ms E$ acts as a Hamiltonian for the linearized perturbations and time evolution yields a suitably smooth map on the initial data space; (iii) the background admits a suitable \(t\)-\(\phi\) reflection isometry, $i$, so that the canonical energy $\ms E$ can be decomposed into a kinetic part $\ms K$ and a potential part $\ms U$; (iv) the kinetic energy $\ms K$ is positive definite. Thus, it should be possible to generalize our results to a wide variety of other cases as well as a wide variety of other theories. As simple examples, in Appendix \ref{sec:other-fields} we analyze the cases of an axisymmetric Klein-Gordon scalar field and an axisymmetric electromagnetic field propagating in an arbitrary stationary-axisymmetric black hole background with a ($t$-$\phi$)-reflection isometry.

\bigskip

{\bf Acknowledgements}

We wish to thank Stefan Hollands for numerous extremely helpful discussions and communications throughout the course of this work. This research was supported in part by NSF grant PHY 12-02718 to the University of Chicago.

\appendix


\section{Projection Operators}\label{sec:proj-ops}

The purpose of this Appendix is to outline the proof of the following proposition:

\begin{prop} \label{prop:proj} The orthogonal projection operators $\Pi_c: \ms P_{\rm gr} \to \ms P_{\rm gr}$ and $\Pi_g: \ms P_{\rm gr} \to \ms P_{\rm gr}$ defined in \autoref{sec:lin-pert} map $\ms P_{\rm gr}^\infty$ to itself and---as linear maps on $\ms P_{\rm gr}^\infty$---are continuous in the natural (Fr\'{e}chet) topology on $\ms P_{\rm gr}^\infty$.
\end{prop}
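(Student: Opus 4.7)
The plan is to prove $\Pi_c$ preserves $\ms P_{\rm gr}^\infty$ and is Fr\'{e}chet-continuous, and then to deduce the analogous statements for $\Pi_g$ by the same argument with the modified boundary conditions from \autoref{eq:Wg-defn}. The key is to recast the projection as the solution of an elliptic boundary value problem on $\Sigma$, and then invoke standard elliptic regularity on weighted Sobolev spaces.

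First I would reformulate the projection in terms of $\mc L^*$. Using $\mc S^*=-\mc S$ and $\mc S^2=-I$, the symplectic orthogonality $\ms V_c=\ms W_c^{\mc S\perp}$ is equivalent to the $L^2$-orthogonality $\ms V_c=(\mc L^*[\svp{\ms W}_c])^\perp$, where $\svp{\ms W}_c$ is the space of smooth pairs $\svp\xi=(\xi,\xi^a)$ asymptotically approaching translations and rotations, with $\xi^A|_B=0$. Given $\delta X\in\ms P_{\rm gr}^\infty$, write
\be
\Pi_c\delta X = \delta X - \mc L^*\svp\xi,
\ee
where $\svp\xi$ is characterized (in the sense of \autoref{eq:bound-integral}) by the conditions that $(\delta X-\mc L^*\svp\xi)$ satisfy the weak constraints $\mc L(\delta X-\mc L^*\svp\xi)=0$, the perturbed ADM conditions \autoref{admcond}, and the horizon conditions \autoref{horcond}. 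Reading off the boundary terms in \autoref{eq:bound-integral} and \autoref{admcond}, this is equivalent to an elliptic boundary value problem of the form
\begin{subequations}
\begin{align}
\mc L\mc L^*\svp\xi &= \mc L\delta X \quad\text{on }\Sigma,\\
\xi^A\vert_B &= 0,\\
(\text{Neumann-type conditions for }\xi,\xi_r\text{ and }\mc L^*\svp\xi) &= (\text{data from }\delta X)\quad\text{on }B,\\
\svp\xi &\to (\text{asymptotic translation}+\text{rotation at }\infty).
\end{align}
\end{subequations}

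Next I would verify that $\mc L\mc L^*$ is elliptic as a mixed-order (Douglis--Nirenberg) system. From \autoref{eq:L-defn} and \autoref{eq:L*-defn}, the principal part of $\mc L^*$ in the $\xi$-block involves $D^aD^b-h^{ab}\triangle$ acting on $\xi$ (fourth order overall after composing with $\mc L$), while the $\xi^a$-block has principal part $2D_{(a}\xi_{b)}$ (second order after composition). A straightforward computation of the principal symbol gives blocks built from $|k|^4$ and the Lam\'e operator $|k|^2\delta^{ab}+k^ak^b$, both of which are invertible for $k\neq 0$, so $\mc L\mc L^*$ is elliptic in the Douglis--Nirenberg sense. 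The boundary conditions at $B$ listed above can be checked to satisfy the Lopatinsky--Shapiro (complementing) condition by the same symbol computation used in \cite{HW-stab}, Lemma 3; the asymptotic conditions at $\infty$ amount to prescribing the finite number of linear functionals $\delta M,\delta P_i,\delta J_\Lambda$ and are compatible with the weighted framework since $\mc L\delta X=o(\rho^{-d/2-2})$ for $\delta X\in\ms P_{\rm gr}^\infty$.

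With ellipticity in hand, I would apply standard regularity for elliptic BVPs in weighted Sobolev spaces $W^k_\rho$ on asymptotically flat manifolds with compact interior boundary (e.g.\ the Cantor--Brill--Deser--Choquet-Bruhat framework used in \cite{Chr-De,Can-Die}). Existence and uniqueness of $\svp\xi$ in a suitable weighted space are already established in \cite{HW-stab}; the new input is interior and boundary regularity. The elliptic estimate
\be
\norm{\svp\xi}_{W^{k+s}_\rho} \leq C_k\bigl(\norm{\mc L\delta X}_{W^{k+s-m}_\rho}+\norm{\delta X}_{W^0_\rho}\bigr)
\ee
(where $s$ is the order offset between the $\xi$- and $\xi^a$-blocks and $m$ is the order of $\mc L\mc L^*$), combined with the bootstrap $\mc L^*\svp\xi=\delta X-\Pi_c\delta X\in W^k_\rho$ controlled by $\norm{\delta X}_{W^k_\rho}$, shows that $\svp\xi$ (and hence $\Pi_c\delta X$) lies in every $W^k_\rho$, i.e.\ in $\ms P_{\rm gr}^\infty$, with norm bounds linear in $\norm{\delta X}_{W^{k'}_\rho}$ for some $k'=k'(k)$. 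This is exactly continuity of $\Pi_c$ in the Fr\'{e}chet topology. The proof for $\Pi_g$ is identical, with the boundary conditions $\xi|_B=\xi_r|_B=\ms D_A\xi^A|_B=0$ from \autoref{eq:Wg-defn} replacing $\xi^A|_B=0$; one must recheck the Lopatinsky--Shapiro condition for this new set, which reduces to the invertibility of the Laplacian-type operator on $B$ appearing in \autoref{eq:bound-gauge-expodd}--\autoref{eq:bound-gauge-expeven}, already established in \cite{HW-stab,And-Mars-Sim,Gal-Sch}.

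The main obstacle I anticipate is technical rather than conceptual: verifying the Lopatinsky--Shapiro condition for the full mixed-order system $\mc L\mc L^*$ with the (somewhat unusual) combination of Dirichlet/Neumann-type boundary data at $B$, and checking that the weighted elliptic regularity theorems in the literature apply verbatim to this particular asymptotically flat/compact boundary setting. Once these two points are settled, the proposition follows by the general pattern above with no additional ingredients.
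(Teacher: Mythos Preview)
Your strategy is the same as the paper's: reduce $\Pi_c\delta X=\delta X-\mc L^*\svp\xi$ to solving the elliptic system $\mc L\mc L^*\svp\xi=\mc L\delta X$ with suitable boundary conditions, and then read off regularity and continuity from elliptic theory in weighted Sobolev spaces. The difference lies in how you propose to obtain the elliptic estimates. You aim to verify the Lopatinsky--Shapiro complementing condition for a mixed Dirichlet/Neumann boundary problem and then invoke standard estimates; the paper instead works variationally, proving a Poincar\'e-type coercivity bound $c\|\svp\xi\|_{W^2_\rho\oplus W^1_\rho}\le\|\mc L^*\mc M_\rho\svp\xi\|_{L^2}$ on a closed subspace of $W^2_\rho\oplus W^1_\rho$ defined only by the \emph{essential} (Dirichlet) constraints $\xi^A|_B=0$ (for $\Pi_c$) or $\xi|_B=\xi_r|_B=\ms D_A\xi^A|_B=0$ (for $\Pi_g$), and then applies Lax--Milgram. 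In that setting the remaining horizon conditions ($\delta\varepsilon=\delta\vartheta_\pm=0$, resp.\ $\delta\varepsilon=\delta\varpi_{AB}=0$) arise automatically as \emph{natural} boundary conditions of the weak formulation, so there is no separate Lopatinsky--Shapiro check to perform. Your route is in principle sound, but note that \cite{HW-stab}, Lemma~3 does not actually carry out a Lopatinsky--Shapiro computation---it uses the coercivity/Lax--Milgram argument---so your citation for that step is misplaced, and you would need to supply the boundary-symbol analysis yourself.

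Two points you pass over lightly are handled explicitly in the paper and are worth flagging. First, the asymptotic translation/rotation freedom and the ADM constraints $\delta M=\delta P_i=\delta J_\Lambda=0$ are not absorbed into the weighted elliptic theory at all: the paper first solves the homogeneous-at-infinity problem and then performs a separate, finite-codimensional correction using a basis $\svp\xi_I$ of solutions to $\mc L\mc L^*\svp\xi_I=0$ with nontrivial asymptotic Killing data, chosen so that the resulting projection is genuinely $L^2$-orthogonal (this orthogonality is what ultimately gives $\Pi_c\Pi_g=\Pi_g\Pi_c$). Second, in $d=3$ with a non-rotating background the Poincar\'e inequality fails on account of the Killing field $t^\mu$, and one must pass to the orthogonal complement of $\mc M_\rho^{-1}t^\mu$; your outline does not mention this kernel issue, which would resurface in your framework as a failure of the Fredholm alternative.
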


The basic strategy for proving this proposition consists of constructing these projection operators by solving elliptic equations. The nice properties of the projection operators then follow from elliptic regularity. 
This strategy was successfully implemented in Proposition 5 of \cite{HW-stab}, with the main technical steps carried out in Lemma 3 of that reference. However, the choice of boundary conditions for spaces our $\ms W_c$ and $\ms W_g$ differ from the space, $\ms W_{HW}$, considered in \cite{HW-stab}, and there are several other (relatively minor) differences as well as some (small) improvements that can be made. Therefore, rather than merely referring the reader to \cite{HW-stab}, we shall we shall now outline the main steps needed to prove \autoref{prop:proj}. However, our proof mirrors the proof given in \cite{HW-stab} in its essential details.

To project $\delta X \in \ms P_{\rm gr}$ orthogonally to $\mc S [\ms W_c]$, with $\ms W_c$ defined by  \autoref{eq:Wc-defn}, we would like to solve the elliptic equation
\be
\mc L \mc L^* \svp \xi = \mc L\delta X 
\label{Pic}
\ee
with $\svp \xi$ satisfying the conditions appearing in the definition \autoref{eq:Wc-defn} of $\ms W_c$ and also such that $\delta X - \mc L^* \svp \xi$ satisfies the boundary conditions \autoref{admcond} and \autoref{horcond}. If we can find such a $\xi$, then
	\be\label{eq:Pic-defn}
		\delta X' \defn \delta X - \mc L^*\svp\xi
	\ee
will satisfy $\mc L \delta X' = 0$ together with the boundary conditions \autoref{admcond} and \autoref{horcond}. Thus, we have $\delta X' \in \ms V_c$ whereas $\mc S^* \mc L^* \xi \in \ms W_c$, so the desired projection map is 
\be
\Pi_c \delta X = \delta X' = \delta X - \mc L^*\svp\xi\, .
\ee
Similarly, to obtain the projection operator $\Pi_g$, we we wish to solve the equation 
\be
\mc L\mc L^*\svp\xi = \mc L\mc S\delta X
\ee
with $\svp \xi$ satisfying the conditions appearing in the definition \autoref{eq:Wg-defn} of $\ms W_g$ and also such that $\delta X - \mc S^* L^* \svp \xi$ satisfies the boundary conditions \autoref{eq:gauge-conds-adm} and \autoref{eq:gauge-conds-B}. We then obtain
	\be\label{eq:Pig-defn}
	\Pi_g \delta X = \delta X - \mc S^*\mc L^*\svp\xi \, .
	\ee

Thus, the proof hinges on the ability to solve the equation
$\mc L \mc L^* \svp \xi = \svp \alpha$
for suitable $\svp \alpha$ such that $\svp \xi$ satisfies the desired boundary conditions.
The key step in showing this is the proof of a Poincare inequality of the form
	\be\label{eq:bounded-coercive}
		c\norm{\svp\xi}_{W^2_\rho \oplus W^1_\rho} \leq \norm{\mc L^* \mc M_\rho \svp\xi}_{L^2 \oplus L^2} \leq C\norm{\svp\xi}_{W^2_\rho \oplus W^1_\rho}
	\ee
for positive constants \(c, C\) where $W^2_\rho \oplus W^1_\rho$ denotes the weighted Sobolev space on $\svp \xi = (\xi, \xi^a)$ and the weighting matrix appearing in the middle term is
\be
 \mc M_\rho \defn \begin{pmatrix}\rho^2  & 0 \\ 0 & \rho \end{pmatrix} \, ,
 \ee 
 where $\rho$ is a positive function that goes to $1$ in a neighborhood of $B$ and approaches \( \lb( x_1^2 + \ldots + x_d^2 \rb)^{1/2} \) near infinity. We want this inequality to hold for all $\svp \xi$ lying in a suitable closed subspace of $W^2_\rho \oplus W^1_\rho$. As can be seen from the proof of lemma 3 of \cite{HW-stab}, a certain boundary term at $B$ arises in the estimates, and, for the validity of the first inequality in \autoref{eq:bounded-coercive}, we need $\svp \xi$ to be such that this boundary term vanishes. This requires the vanishing of
	\be\begin{split}
		  & \int_Br^a \lb( D_b\xi_a \xi^b - D_b\xi^b \xi_a \rb) = \int_B \lb( \xi^aD_a\xi_r  - \xi_rD_a\xi^a  \rb) \\
		= & \int_B\lb( \xi_rd_r\xi_r  + \xi^A\ms D_A \xi_r  - \xi_rd_r\xi_r  + \xi_r\ms D_A\xi^A  \rb) \\
		= & \int_B\lb( \xi^A\ms D_A \xi_r  -  \xi_r\ms D_A\xi^A  \rb) = - 2 \int_B \xi_r \ms D_A\xi^A 
	\end{split}\ee
Thus, the Poincare inequality \autoref{eq:bounded-coercive} will hold if we restrict $\svp \xi$ to the closed subspace of $W^2_\rho \oplus W^1_\rho$ defined by\footnote{This condition is well defined since for any $\xi^a \in W^1_\rho$, its restriction to $B$ is well defined in $L^2(B)$ and its tangential derivatives are therefore also well defined (weakly).} $(\ms D_A\xi^A)|_B = 0$ (or alternatively, the closed subspace defined by $\xi_r = 0$). In addition, we may impose the vanishing at $B$ of any other components of $\svp \xi$ and $\xi_r$ and/or their tangential derivatives. However, we cannot impose conditions on radial derivatives of these quantities, since such conditions would not define closed subspaces of $W^2_\rho \oplus W^1_\rho$.

The closed subspaces of $W^2_\rho \oplus W^1_\rho$ of interest are
\begin{subequations}\label{eq:X-defn}\begin{align}
\ms X_{HW} &\defn \lb\{ \svp\xi = (\xi,\xi^a) \in W^2_\rho \oplus W^1_\rho \st \xi\vert_B  = \xi_r\vert_B \text{ and } \xi^A\vert_B = 0 \rb\} \\
\ms X_{c}  &\defn \lb\{ \svp\xi \in W^2_\rho \oplus W^1_\rho \st \xi^A\vert_B = 0 \rb\} \\
\ms X_{g}  &\defn \lb\{ \svp\xi \in W^2_\rho \oplus W^1_\rho \st \xi\vert_B = \xi_r\vert_B = (\ms D_A\xi^A)|_B = 0 \rb\}
\end{align}\end{subequations}
The space $\ms X_{c}$ will be used for the construction of $\Pi_c$ and the space $\ms X_{g}$ will be used for the construction of $\Pi_g$. The space $\ms X_{HW}$ is included in this list because it is the space used in \cite{HW-stab}, but it will not be used here.

As shown in the proof of lemma 3 of \cite{HW-stab}, for $d \ge 4$ (i.e., spacetime dimension $5$ or greater), the Poincare inequality \autoref{eq:bounded-coercive} holds on the spaces $\ms X_{HW}$, $\ms X_{c}$, and $\ms X_{g}$. Equation \autoref{eq:bounded-coercive} also holds in $d=3$ provided that, if the black hole is non-rotating (i.e., for the Schwarzschild case), we pass to the subspace orthogonal in $L^2 \oplus L^2$ to $\mc M_\rho^{-1} t^\mu$, where $t^\mu$ is the timelike Killing field of the background black hole. 

Now, let $\ms X$ denote any of the spaces $\ms X_{HW}$, $\ms X_{c}$, and $\ms X_{g}$, with the additional projection orthogonal to $\mc M_\rho^{-1} t^\mu$ taken in the case of a Schwarzschild background in $d=3$. In view of \autoref{eq:bounded-coercive}, the Lax-Milgram theorem\footnote{This theorem is just the Riesz lemma applied to $\inp{\svp \beta, \cdot}$,
which is a bounded linear map on $W^2_\rho \oplus W^1_\rho$ and hence, by \autoref{eq:bounded-coercive}, is also a bounded linear map on the inner product space defined by $\inp{\mc L^* \mc M_\rho \, \cdot, \mc L^* \mc M_\rho \, \cdot} $.} 
implies that given any \(\svp \beta \in \ms X\), there exists $\svp\eta \in \ms X$ such that for all $\svp\chi \in \ms X$ we have
	\be\label{eq:LL*-weak}
		\inp{\mc L^* \mc M_\rho \svp\eta, \mc L^* \mc M_\rho \svp\chi } = \inp{ \svp \beta, \svp\chi }  \, ,
	\ee
where $\inp{,}$ denotes the $L^2$ inner product. In particular, since this equation holds for all $\svp \chi \in C_0^\infty$, it follows that 
\be
\xi \defn  \mc M_\rho \svp \eta 
\ee
is a weak solution to
\be
\mc L \mc L^*\svp\xi = \svp \alpha
\label{LLeq}
\ee
where $\alpha =  \mc M_\rho \svp \beta$.
However, since $\mc L \mc L^*$ is elliptic (see \cite{Chr-De}, \cite{Corvino} and also Ch.VII Sec.2 of \cite{CB-book}), if $\svp \beta \in \ms X \inter_k (W^k_\rho \oplus W^k_\rho)$ (so that $\svp \alpha \in (\mc M_\rho \ms X) \inter_k  (\rho^2 W^k_\rho \oplus \rho W^k_\rho)$), then $\svp \xi \in (\mc M_\rho \ms X) \inter_k  (\rho^2 W^k_\rho \oplus \rho W^k_\rho)$ and \autoref{LLeq} holds in the ordinary (strong) sense. Furthermore $\svp \xi$ depends continuously on $\svp \alpha$ in the natural (Fr\'{e}chet) topology of these spaces. It also follows from \autoref{eq:LL*-weak} together with the fact that \autoref{LLeq} holds in the strong sense that for all $\svp \alpha, \svp \psi \in (\mc M_\rho \ms X)  \inter_k (\rho^2 W^k_\rho \oplus \rho W^k_\rho)$, the solution $\svp \xi$ to \autoref{LLeq} given by the Lax-Milgram theorem satisfies
\be
\inp{\mc L^* \mc \svp \xi, \mc L^* \svp\psi } = \inp{ \mc L \mc L^*\svp\xi, \svp\psi }
\ee
However, since $\svp \xi$ and $\svp \psi$ are smooth, the difference between the left and right sides can be explicitly computed by the same calculation as previously done in \autoref{eq:bound-integral}. In this case, no boundary terms arise from infinity and we obtain
	\be
\inp{\mc L^* \mc  \svp\xi, \mc L^* \svp\psi } - \inp{ \mc L \mc L^*\svp\xi, \svp\psi } = 2\int_B \lb( -d_r\psi~ \delta\varepsilon + \psi~\delta\vartheta_{\rm even} + \psi_r~ \delta\vartheta_{\rm odd} - \psi^{AB}~\delta\varpi_{AB} \rb)
	\ee
where $\delta\varpi_{AB}$ is given by \autoref{eq:varpi-defn} and we have used \(\ms D_A\psi^A = 0\) to replace $\psi^A$ by $\ms D_B\psi^{AB}$ for some \(\psi^{AB} = \psi^{[AB]}\). Since the right side vanishes for all $\svp \psi \in (\mc M_\rho \ms X)  \inter_k (\rho^2 W^k_\rho \oplus \rho W^k_\rho)$, we obtain the following additional boundary conditions on the solutions $\svp \xi$ in the various cases:
\begin{itemize}
		\item \(\ms X_{HW} \): {\em Definition}: \(\xi\vert_B = \xi_r\vert_B\), \( \xi^A\vert_B = 0 \); {\em Additional conditions on solutions}: $\delta\varepsilon \vert_B = 0$,  $\delta\vartheta_+\vert_B = 0$.
		\item \(\ms X_c \): {\em Definition}: \( \xi^A\vert_B = 0 \); {\em Additional conditions on solutions}: $\delta\varepsilon \vert_B = 0$, $\delta\vartheta_{\rm even} \vert_B = 0$, $\delta\vartheta_{\rm odd}\vert_B = 0$.
		\item \(\ms X_g \): {\em Definition}: \(\xi\vert_B = 0 = \xi_r\vert_B\), \( \ms D_A\xi^A\vert_B = 0 \); {\em Additional conditions on solutions}: $\delta\varepsilon \vert_B = 0$, $\delta\varpi_{AB} \vert_B = 0$.
\end{itemize}
Thus, the solutions given by the Lax-Milgram theorem to \autoref{LLeq} with $\svp \alpha \in (\mc M_\rho \ms X) \inter_k  (\rho^2 W^k_\rho \oplus \rho W^k_\rho)$ satisfy precisely the conditions at $B$ needed to obtain the spaces $\ms W_{HW}$, $\ms W_{c}$, and $\ms W_{g}$, respectively.

The key ingredients are now in place to construct the projection map $\Pi_c$. As explained at the beginning of this section, to construct $\Pi_c$, we wish to solve \autoref{Pic} for any given $\delta X \in \ms P^\infty_{\rm gr}$. In order to obtain a solution that satisfies the desired boundary conditions at $B$, we choose $\svp {\hat{\xi}}$ be smooth and of compact support such that: (i) $\hat{\xi}^A\vert_B = 0$; (ii) $\mc L^* \svp {\hat{\xi}} - \delta X$ satisfies the horizon boundary conditions \autoref{horcond} required for elements of $\ms V_c$; and (iii) $\lb(\mc L \mc L^* \svp {\hat{\xi}} - \mc L \delta X \rb)^A \vert_B = 0$. There is no difficulty simultaneously satisfying all of these conditions because $\svp {\hat{\xi}}$ is not constrained by any equations holding in the bulk, so all of the various radial derivatives $(d_r)^n \svp {\hat{\xi}}$ at $B$ can be chosen independently.

We now solve
\be
\mc L \mc L^* \svp {\tilde{\xi}}  = \mc L\delta X - \mc L \mc L^* \svp {\hat{\xi}} 
\ee
Condition (iii) guarantees that the right side lies in $(\mc M_\rho \ms X_c)  \inter_k (\rho^2 W^k_\rho \oplus \rho W^k_\rho)$, so we obtain a solution $\svp {\tilde{\xi}} \in (\mc M_\rho \ms X_c)  \inter_k (\rho^2 W^k_\rho \oplus \rho W^k_\rho)$. 
It follows that $\delta X - \mc L^* (\svp {\tilde{\xi}} + \svp {\hat{\xi}})$ satisfies the constraints \autoref{linconst} and the horizon boundary conditions \autoref{horcond} needed to be an element of $\ms V_c^\infty$. However, it need not satisfy the remaining conditions \autoref{admcond}, namely $\delta M = \delta J_\Lambda = \delta P_i = 0$. Since these are only a finite number of conditions, only a finite co-rank further projection is needed. 

In \cite{HW-stab}, this additional projection was done by choosing an arbitrary minimal set of solutions $\Psi_I$ that satisfy the horizon boundary conditions and whose span yields arbitrary values of $(\delta M, \delta J_\Lambda, \delta P_i)$. The final projection to $\ms V_c^\infty$ was done by subtracting from $\delta X - \mc L^* (\svp {\tilde{\xi}} + \svp {\hat{\xi}})$ the linear combination of $\Psi_I$ that makes $\delta M = \delta J_\Lambda = \delta P_i = 0$. However, this does not, in general, yield an orthogonal projection, as would be needed to obtain commutativity with $\Pi_g$. 

We will carry out this additional projection by making a special choice of $\Psi_I$ that yields an orthogonal projection. Let $\svp {\hat {\chi}}_I$ be a basis for asymptotic translations and rotations with respect to the axial Killing fields at infinity, with each $\svp {\hat {\chi}}_I$ chosen so that it vanishes in a neighborhood of $B$. Now solve the equation
\be
\mc L \mc L^* \svp {\tilde{\chi}}_I = - \mc L \mc L^* \svp {\hat{\chi}}_I 
\ee
to obtain a solution in $(\mc M_\rho \ms X_c)  \inter_k (\rho^2 W^k_\rho \oplus \rho W^k_\rho)$. Define 
\be
\svp \xi_I = \svp {\tilde{\chi}}_I + \svp {\hat{\chi}}_I \, .
\ee
It follows immediately from the definition \autoref{eq:Wc-defn} of $\ms W_c$ that $S^* \mc L^* \svp \xi_I \in \ms W_c$. Any element of $\ms V_c$ must therefore be $L^2$-orthogonal to $\mc L^* \svp \xi_I$. In particular, we have $\mc L^* \svp \xi_I \notin \ms V_c$. But $\mc L^* \svp \xi_I$ satisfies the constraints \autoref{linconst} and the horizon boundary conditions \autoref{horcond} needed to be an element of $\ms V_c^\infty$. This implies that $\mc L^* \svp \xi_I$ cannot satisfy $\delta M = \delta J_\Lambda = \delta P_i = 0$. More generally, no linear combination of the $\mc L^* \svp \xi_I$ can satisfy $\delta M = \delta J_\Lambda = \delta P_i = 0$, so the span of $\mc L^* \svp \xi_I$ must yield all possible values of $(\delta M, \delta J_\Lambda, \delta P_i)$. Thus, $\mc L^* \svp \xi_I$ may be used to play the role of $\Psi_I$.

Now let 
\be
\svp \xi = \svp {\tilde{\xi}} + \svp {\hat{\xi}} + \sum_I c_I \svp \xi_I
\ee
where the $c_I$ are chosen so that $\delta X - \mc L^* \svp \xi$ satisfies $\delta M = \delta J_\Lambda = \delta P_i = 0$. Putting everything together, we have shown that any $\delta X \in \ms P_{\rm gr}^\infty$ can be written as
\be
\delta X = \delta X' + \mc L^* \svp \xi
\ee
where $S^* \mc L^* \svp \xi \in \ms W_c$ and $\delta X' \in \ms V_c^\infty$. Thus, we have $\Pi_c \delta X = \delta X' \in \ms V_c^\infty$, as we desired to show. The continuous dependence of $\delta X'$ on $\delta X$ in the topology of $\ms P_{\rm gr}^\infty$ follows directly from the continuous dependence of the solution $\svp \xi$ on $\svp \alpha$ in \autoref{LLeq}.

The corresponding results for the projection map $\Pi_g$ are obtained in complete parallel by replacing $\ms W_c$ and $\ms X_c$ with $\ms W_g$ and $\ms X_g$ and inserting the map $\mc S$ in appropriate places.

\section{Scalar and Electromagnetic Fields}\label{sec:other-fields}

In this Appendix, we give a treatment of scalar fields and electromagnetic fields on a black hole background analogous to our treatment of gravitational perturbations. The black hole background is assumed to be stationary and axisymmetric with a ($t$-$\phi$) reflection isometry $i$ and with a bifurcate Killing horizon with bifurcation surface $B$, but it need not be a solution to Einstein's equation. As in the gravitational case, we consider only axisymmetric scalar and electromagnetic fields.

\subsection{Scalar Fields}

The case of a Klein-Gordon scalar field $\phi$ propagating on a black hole background is particularly simple because there are no constraints and there is no gauge freedom. The initial data for a scalar field is \(\delta X = \lb(\pi,\psi\rb)\), where $\psi = \phi|_\Sigma$ and $\pi = (u^\mu \nabla_\mu \phi)|_\Sigma$, where $u^\mu$ is the unit normal to $\Sigma$. The analog of $\ms P_{\rm gr}$ is
\be
\ms P_{\rm KG} \defn L^2(\Sigma) \oplus L^2(\Sigma) 
\ee
with the natural volume measure on $\Sigma$. The analog, $\ms P_{\rm KG}^\infty$, of $\ms P_{\rm gr}^\infty$ is the intersection of the corresponding weighted Sobolev spaces (see \autoref{Pgrinfty}). The symplectic form is
\be
\Omega_\Sigma (\widetilde{\delta X},\delta X) \defn \int_\Sigma (\tilde \pi \psi - \tilde \psi \pi) \, ,
\ee
which is represented on $\ms P_{\rm KG}$ by the orthogonal linear map \(\mc{S}\) given by
\be
\mc S (\pi , \psi) = ( \psi, -\pi) \, .
\ee
Since we have no constraints or gauge conditions to enforce, the analogs of the projectors $\Pi_c$ and $\Pi_g$ are trivial, so $\ms V^\infty_{\rm KG} = \ms P_{\rm KG}^\infty$.

In both the static and rotating cases, the reflection odd and even parts, respectively, of the perturbation are simply \(P = \lb( \pi,  0 \rb)\) and \(Q = \lb( 0,  \psi\rb)\). Note that in this case we have $\ms V^\infty_{\rm KG,odd} \cong \ms V^\infty_{\rm KG,even}$. The kinetic and potential energies for axisymmetric fields are:
	\be\begin{split}
		\ms K_{\rm KG} &= \int_\Sigma N\pi^2 \\
		\ms U_{\rm KG} &= \int_\Sigma N\lb[ (D_a\psi)^2 + m^2\psi^2\rb]
	\end{split}\ee
Both $\ms K_{\rm KG}$ and $\ms U_{\rm KG}$ are manifestly positive definite. Indeed, for a scalar field\footnote{For other fields the canonical energy can differ from the right side of \autoref{canstress} by a ``boundary term''; see the appendix of \cite{IW} for further discussion.}, the canonical energy takes the form
\be
\ms E_{\rm KG} = \ms K_{\rm KG} + \ms U_{\rm KG} = 2\int_\Sigma T_{\mu\nu}t^\mu u^\nu 
\label{canstress}
\ee
where \(T_{\mu\nu}\) is the energy-momentum tensor
\be
	T_{\mu\nu} \defn \nabla_\mu\phi \nabla_\nu\phi -\frac{1}{2} g_{\mu\nu}\lb( \nabla_\lambda\phi\nabla^\lambda\phi + m^2\phi^2 \rb)
\ee
For axisymmetric fields we then have \(T_{\mu\nu}t^\mu u^\nu = N T_{\mu\nu}u^\mu u^\nu \geq 0 \), which again shows the positivity of the canonical energy.

The evolution equations for axisymmetric fields are
	\be\label{eq:evol-KG}\begin{split}
		\dot\psi & = N\pi  \\
		\dot\pi  & = D_a\lb( ND^a\psi \rb) - Nm^2\psi
	\end{split}\ee
Thus, the operator $\mc K_{\rm KG}$ corresponds simply to multiplication by $N$. It is then obvious that for $\psi \in \ms V^\infty_{\rm KG,even}$, we have \(\psi \in \mc K_{\rm KG}[\ms V_{\rm KG,odd}^\infty]\) if and only if \(\psi\vert_B = 0\). The ``inverse kinetic energy Hilbert space'' $\ms H_{\rm KG}$ is then the $L^2$-space on $\Sigma$ with volume measure given by $N^{-1}$ times the natural volume measure on $\Sigma$.

Since $\ms U_{\rm KG}$ is positive definite, the operator $\mc A_{\rm KG} : \ms H_{\rm KG} \to \ms H_{\rm KG}$ is positive definite, so there cannot exist any exponential growth instabilities for axisymmetric\footnote{For a static black hole background, the restriction to axisymmetric perturbations is unnecessary.} scalar fields in any black hole background. In fact, in this case the expression for $\ms U_{\rm KG}$ is sufficiently simple that one can prove stability for elements of $\ms V^\infty_{\rm KG,even}$ that lie in $\ms H_{\rm KG}$, as done in \cite{W-stab} for the case of Schwarzschild. The restriction to data vanishing at $B$ can then be eliminated (for general black holes) by the ``trick'' used in \cite{KW-stab}. For the case of a Schwarzschild or Kerr black hole, these results can be greatly improved by the methods of \cite{Daf-Rod-lec} and \cite{DRS-stab}.

\subsection{Electromagnetic Fields}

The initial data for a Maxwell field is \(\delta X = \lb(E_a,  A_a\rb)\), where $E_a$ is the electric field on $\Sigma$ and $A_a$ is the pullback of the vector potential to $\Sigma$. The analog of $\ms P_{\rm gr}$ is
\be
\ms P_{\rm EM} \defn L^2(\Sigma, T^*\Sigma)\oplus L^2(\Sigma, T^*\Sigma) 
\ee
and the analog, $\ms P_{\rm EM}^\infty$, of $\ms P_{\rm gr}^\infty$ is the intersection of the corresponding weighted Sobolev spaces. The symplectic form is
\be
\Omega_\Sigma (\widetilde{\delta X},\delta X) \defn \int_\Sigma (\tilde E^a A_a - \tilde A_a E^a) \, ,
\ee
which is represented on $\ms P_{\rm EM}$ by the orthogonal linear map \(\mc{S}\) given by
\be
\mc S (E^a , A_a) = ( A^a, -E_a) \, .
\ee

The analog of the operator $\mc L$ of \autoref{eq:L-defn} maps smooth initial data on $\Sigma$ to smooth functions on $\Sigma$ and is given by
\be\label{eq:L-defn-EM}
\mc L_{\rm EM} \begin{pmatrix}E_a\\ A_a\end{pmatrix} \defn D^aE_a \, ,
\ee
i.e., the constraints are
\be
\mf c_{\rm EM} = \mc L_{\rm EM} \delta X = D^aE_a = 0 \, .
\label{emconst}
\ee
The formal $L^2$ adjoint of $\mc L_{\rm EM}$ maps smooth functions on $\Sigma$ to smooth initial data on $\Sigma$ and is given by
\be\label{eq:L*-defn-EM}\begin{split}
\mc L_{\rm EM}^*\xi = \begin{pmatrix} -D_a\xi \\ 0\end{pmatrix} \, .
\end{split}\ee
The gauge transformations are then \(\hat\delta_\xi X = \mc S^*\mc L_{\rm EM}^*\xi \).

The analog of the space $\ms W_c$ of \autoref{eq:Wc-defn} is
	\be
		(\ms W_{\rm EM})_c \defn \{ S^*\mc L_{\rm EM}^* \xi \in \ms P_{\rm EM}^\infty \st \xi\vert_B = 0 \text{ and } \xi \sim {\rm const.}\vert_\infty \}
	\ee
The symplectically-orthogonal space is \((\ms V_{\rm EM})_c \defn (\ms W_{\rm EM})_c^{\mc S\perp}\). If $\delta X \in \ms P_{\rm EM}^\infty$, then $\delta X \in (\ms V_{\rm EM})_c^\infty \defn (\ms V_{\rm EM})_c \cap \ms P_{\rm EM}^\infty$ if and only if
	\be\label{eq:const-conds-EM}\begin{split}
	\mf c_{\rm EM}  &= 0 \\
	\mc Q &= 0
\end{split}\ee
where
\be
\mc Q \defn \int_\infty E_\rho
\ee
is the electric charge of the solution. In view of \autoref{emconst}, the charge integral may be taken over any sphere homologous to a sphere at infinity.

The analog of the space $\ms W_g$ of \autoref{eq:Wg-defn} is
	\be
		(\ms W_{\rm EM})_g \defn \{ S^*\mc L_{\rm EM}^* \xi \in \ms P_{\rm EM}^\infty \}
	\ee
i.e., $(\ms W_{\rm EM})_g$ consists of all gauge transformations lying in $\ms P_{\rm EM}^\infty$. The space $(\ms V_{\rm EM})_g$ is the orthogonal complement of $(\ms W_{\rm EM})_g$. Elements of $(\ms V_{\rm EM})_g^\infty \defn (\ms V_{\rm EM})_g \cap \ms P_{\rm EM}^\infty$ satisfy the gauge conditions
\be\label{eq:gauge-conds-EM}\begin{split}
\mf g_{\rm EM} \defn \mc L_{\rm EM} \mc S \delta X = D^aA_a = 0 \\
 A_r\vert_B = 0 \, .
\end{split}\ee
These conditions fix the gauge completely. The projection operators $(\Pi_{\rm EM})_c$ and $(\Pi_{\rm EM})_g$ can now be defined in direct analogy with $\Pi_c$ and $\Pi_g$ and they satisfy the analogous properties. Indeed, in the Maxwell case, the operator
\(\mc L_{\rm EM} \mc L_{\rm EM}^* = - \triangle\) is just the Laplacian operator acting on scalar functions, so the analogs of many of the results of Appendix \ref{sec:proj-ops} are standard.

The perturbations in
\be
 \ms V_{\rm EM}^\infty \defn  (\Pi_{\rm EM})_g (\Pi_{\rm EM})_c \ms P_{\rm EM}^\infty 
\ee 
can again be split into their odd and even parts under the reflection symmetry $i$. In the case of static black hole background, we may take $i$ to be the \(t\)-reflection isometry. The reflection odd and even parts of the initial data are then \(P = \lb(E_a,  0\rb)\) and \(Q  = \lb( 0,  A_a\rb)\), respectively. The kinetic and potential energies are given by
	\be\begin{split}
		\ms K_{\rm EM} & = \int_\Sigma N E_a^2  \\
		\ms U_{\rm EM} & = 2\int_\Sigma N (D_{[a}A_{b]})^2
	\end{split}\ee
Both $\ms K_{\rm EM}$ and $\ms U_{\rm EM}$ are manifestly positive-definite for any static black hole background. Thus, as for scalar fields, there cannot exist any exponential growth instabilities for electromagnetic fields propagating on an arbitrary static black hole background.

However, the situation is quite different for axisymmetric electromagnetic fields propagating on a stationary-axisymmetric black hole background. We can decompose $E_a$ and $A_a$ into their $\phi$-reflection odd and even parts as
\be
E_a = E^\Lambda \phi_{\Lambda a} + \tilde E_a
\ee
\be
A_a = A^\Lambda \phi_{\Lambda a} + \tilde A_a
\ee
with $ \tilde E^a  \phi_{\Lambda a} = \tilde A^a  \phi_{\Lambda a} = 0$. 
The reflection-odd and even parts of the initial data are then, respectively,
\be
P_{\rm EM} = \lb(\tilde E_a, -A^\Lambda \phi_{\Lambda a} \rb) 
\ee
\be
Q_{\rm EM} = \lb(E^\Lambda \phi_{\Lambda a}, \tilde A_a \rb) 
\ee
The kinetic and potential energies are then given by
	\be\begin{split} \label{EMenergy}
		\ms K_{\rm EM} & = \int_\Sigma \lb[ N\tilde E_a^2 + 2N \lb(D_{[a}(A^\Lambda\phi_{b]\Lambda})\rb)^2  +4\bar N^\Theta \phi^a_\Theta\tilde E^bD_{[a}(A^\Lambda\phi_{b]\Lambda})  \rb] \\
		\ms U_{\rm EM} & = \int_\Sigma N\lb[ E^\Lambda E_\Lambda + 2 (D_{[a}\tilde A_{b]})^2  \rb] \, .
	\end{split}\ee
It is clear that there exist black hole backgrounds for which $\ms K_{\rm EM}$ fails to be positive\footnote{This possibility was first pointed out to us by A. Ishibashi.}. To see this, we simply start with any black hole background spacetime (e.g., Schwarzschild) and any initial data $(E_a,  A_a) \in \ms V_{\rm EM}^\infty$ such that $\phi^a_\Theta\tilde E^bD_{[a}(A^\Lambda\phi_{b]\Lambda}) \neq 0$ at some point $x \in \Sigma$. Then we can choose a function $\bar N'^\Theta$ that is sufficiently large in a neighborhood of $x$ so that if we replace $\bar N^\Theta$ by $\bar N'^\Theta$ then the last term in the expression for $\ms K_{\rm EM}$  will be negative and will dominate the first two terms. We can then construct a new stationary-axisymmetric black hole spacetime that has the same induced spatial metric on $\Sigma$, but with a new stationary Killing field $t'^\mu$ that has the same lapse function $N$ as $t^\mu$ but has shift vector $\bar N'^\Lambda \phi_{\Lambda a}$. On this new black hole spacetime, $\ms K_{\rm EM}$ will be negative for the original initial data $(E_a,  A_a)$. Of course, the new stationary black hole spacetime will not be a solution to Einstein's equation. It is not obvious whether $\ms K_{\rm EM}$ can be made negative for any black hole background that is a solution to Einstein's equation. 

The evolution equations for the Maxwell field can be written as:
\be\label{eq:evol-EM}\begin{split}
		\dot A_a & = - D_a\varphi + NE_a - 2N^bD_{[a}A_{b]} \\
		\dot E_a & = 2D^b\lb( ND_{[b}A_{a]} + N_{[b}E_{a]}\rb)
\end{split}\ee
where the gauge conditions \autoref{eq:gauge-conds-EM} applied to \(\dot{\delta X}\) imply that $\varphi$ satisfies:
\be\begin{split}
	- \triangle\varphi + D^a(NE_a - 2N^bD_{[a}A_{b]}) & = 0 \\
	(-d_r\varphi - 2r^aN^bD_{[a}A_{b]})\vert_B & = 0
\end{split}\ee
The operators \(\mc K_{\rm EM}\) and \(\mc U_{\rm EM}\) can be read off from these equations by taking their even and odd parts under the reflection symmetry $i$.

Since $\ms K_{\rm EM}$ need not be positive definite, we cannot directly apply the results of \autoref{sec:evolution} and \autoref{sec:exp} to the electromagnetic case. Remarkably, however, the potential energy $\ms U_{\rm EM}$ {\em is} positive definite (see \autoref{EMenergy}). Therefore, if we simply make the canonical transformation $(P_{\rm EM}, Q_{\rm EM}) \mapsto (P'_{\rm EM} = Q_{\rm EM}, Q'_{\rm EM} = -P_{\rm EM})$, then all of the results of \autoref{sec:evolution} and \autoref{sec:exp} apply to $(P'_{\rm EM}, Q'_{\rm EM})$. In particular, if $\delta X \in \ms V_{\rm EM}^\infty$ and $\Lie_t \delta X$ has negative kinetic energy, then $\delta X$ grows exponentially with time.

\newpage
\bibliographystyle{unsrt}
\bibliography{../stability}      
\end{document}